\newcolumntype{Y}{>{\centering\arraybackslash}X}
\newcommand{\qw}[1][-1]{\ar @{-} [0,#1]}
\newcommand{\qwx}[1][-1]{\ar @{-} [#1,0]}
\newcommand{\cw}[1][-1]{\ar @{=} [0,#1]}
\newcommand{\gate}[1]{*+<.6em>{#1} \POS ="i","i"+UR;"i"+UL **\dir{-};"i"+DL **\dir{-};"i"+DR **\dir{-};"i"+UR **\dir{-},"i" \qw}
\newcommand{\meter}{*=<1.8em,1.4em>{\xy ="j","j"-<.778em,.322em>;{"j"+<.778em,-.322em> \ellipse ur,_{}},"j"-<0em,.4em>;p+<.5em,.9em> **\dir{-},"j"+<2.2em,2.2em>*{},"j"-<2.2em,2.2em>*{} \endxy} \POS ="i","i"+UR;"i"+UL **\dir{-};"i"+DL **\dir{-};"i"+DR **\dir{-};"i"+UR **\dir{-},"i" \qw}
\newcommand{\control}{*!<0em,.025em>-=-<.2em>{\bullet}}
\newcommand{\ctrl}[1]{\control \qwx[#1] \qw}
\newcommand{\rstick}[1]{*!L!<-.5em,0em>=<0em>{#1}}
\newcommand{\lstick}[1]{*!R!<.5em,0em>=<0em>{#1}}
\newcommand{\ustick}[1]{*!D!<0em,-.5em>=<0em>{#1}}
\newcommand{\Qcircuit}{\xymatrix @*=<0em>}
\newtheorem{theorem}{Theorem}
\newtheorem{lemma}{Lemma}
\newtheorem{proposition}{Proposition}
\newtheorem*{rep@theorem}{\rep@title}
\newcommand{\newreptheorem}[2]{%
\newenvironment{rep#1}[1]{%
 \def\rep@title{#2 \ref{##1}}%
 \begin{rep@theorem}}%
 {\end{rep@theorem}}}
\newcommand{\ketbra}[2]{| \hspace{1pt} #1 \rangle \langle #2 \hspace{1pt} |}
\newcommand{\dya}[1]{\ket{#1}\!\bra{#1}}
\newcommand{\dyad}[2]{\ket{#1}\!\bra{#2}}        
\newcommand{\ip}[2]{\langle #1|#2\rangle}      
\newcommand{\HC}{\mathcal{H}}
\renewcommand{\geq}{\geqslant}
\renewcommand{\leq}{\leqslant}
\newcommand{\bob}[0]{\textnormal{Bob}}
\newcommand{\ot}{\otimes}
\newcommand*{\id}{\openone}
\newcommand{\al}{\alpha }
\newcommand{\bt}{\beta }
\newcommand{\pguessmax}{p_{\textnormal{guess}}^{\textnormal{max}}}
\newcommand{\pguess}{p_{\textnormal{guess}}}
\newcommand{\Hmin}{H_{\textnormal{min}}}
\DeclarePairedDelimiter\abs{\lvert}{\rvert}%
\DeclarePairedDelimiter\norm{\lVert}{\rVert}%
\let\oldabs\abs
\def\abs{\@ifstar{\oldabs}{\oldabs*}}
\let\oldnorm\norm
\def\norm{\@ifstar{\oldnorm}{\oldnorm*}}
\DeclareMathOperator{\Tr}{Tr}
\begin{document}
\title{Quantum preparation uncertainty and lack of information}
\author{Filip Rozp\k{e}dek}
\email{f.d.rozpedek@tudelft.nl}
\affiliation{QuTech, Delft University of Technology, Lorentzweg 1, 2628 CJ Delft, The Netherlands}

\author{J\k{e}drzej Kaniewski}
\affiliation{QuTech, Delft University of Technology, Lorentzweg 1, 2628 CJ Delft, The Netherlands}
\affiliation{Centre for Quantum Technologies, 3 Science Drive 2, 117543 Singapore}
\affiliation{QMATH, Department of Mathematical Sciences, University of Copenhagen, Universitetsparken 5, 2100 Copenhagen, Denmark}

\author{Patrick J. Coles}
\affiliation{Institute for Quantum Computing and Department of Physics and Astronomy, University of Waterloo, N2L3G1 Waterloo, Ontario, Canada}

\author{Stephanie Wehner}
\affiliation{QuTech, Delft University of Technology, Lorentzweg 1, 2628 CJ Delft, The Netherlands}
\begin{abstract}
The quantum uncertainty principle famously predicts that there exist measurements that are inherently incompatible, in the sense that their outcomes cannot be predicted simultaneously. In contrast, no such uncertainty exists in the classical domain, where all uncertainty results from ignorance about the exact state of the physical system. Here, we critically examine the concept of preparation uncertainty and ask whether similarly in the quantum regime, some of the uncertainty that we observe can actually also be understood as a lack of information (LOI), albeit a lack of \emph{quantum} information. We answer this question affirmatively by showing that for the well known measurements employed in BB84 quantum key distribution \cite{bb84}, the amount of uncertainty can indeed be related to the amount of available information about additional registers determining the choice of the measurement. We proceed to show that also for other measurements the amount of uncertainty is in part connected to a LOI. Finally, we discuss the conceptual implications of our observation to the security of cryptographic protocols that make use of BB84 states.
\end{abstract}
\maketitle

\section{Introduction}\label{sec:intro}
The uncertainty principle forms one of the cornerstones of quantum theory. As first observed by Heisenberg~\cite{heisenberg} and then rigorously proven by Kennard~\cite{kennard}, it is impossible to perfectly predict the measurement outcomes of both position and momentum observables. This notion was generalised by Robertson to an arbitrary pair of observables~\cite{Robertson} showing that uncertainty is an inherent feature of any non-commuting measurements in quantum mechanics. The described uncertainty is often referred to as preparation uncertainty, because it states that it is impossible to prepare a quantum state for which one could perfectly predict the measurement outcome of both observables.
\\ \indent
A modern way of capturing the notion of preparation uncertainty is by means of a \emph{guessing game}~\cite{Renner}. Such a game makes the concept of preparation uncertainty operational and is of great use in proving the security of quantum cryptographic protocols~\cite{URsurvey}.
Fig.~\ref{fig:game} summarises the game, which in its simplest form works as follows. Bob prepares system $B$ in an arbitrary state $\rho_B$ of his choosing and then passes it to Alice. Alice performs one of two incompatible measurements labeled by $r=0$ and $r=1$ according to a random coin flip contained in the register $R$ and obtains measurement outcome $X$. She then informs Bob which measurement she performed by sending him the register $R$. Bob wins the game if he correctly guesses Alice's measurement outcome $X$.
\\ \indent
To see why this captures the essence of the uncertainty principle, note that if the measurements are incompatible, then there exists no state $\rho_B$ that Bob can prepare that would allow him to guess the outcomes for both choices of measurements with certainty. Uncertainty can thus be quantified by a bound on the average probability that Bob correctly guesses $X$. That is, a relation of the form
\begin{align}
P_{\rm guess}(X|\bob) = p(r=0) P_{\rm guess}(X|\bob,r=0) \nonumber \\
\quad + p(r=1)P_{\rm guess}(X|\bob,r=1) \leq 2^{-\zeta}\, ,
\end{align}
for all $\rho_B$. Equivalently, we can relate the above defined guessing probability to the min-entropy $H_{\rm min}(X|\bob) = - \log P_{\rm guess}(X|\bob)$ (in this article all logarithms are base 2), so that we obtain an inequality:
\begin{align}
H_{\rm min}(X|\bob) \geq \zeta\, .
\end{align}
This expression forms an uncertainty relation as long as the RHS is non-trivial (i.e. $\zeta > 0$). Analogous relations exist for other entropies~\cite{URsurvey}, but here we focus on the min-entropy since it is the relevant measure for quantum cryptography and randomness generation, and it quantifies the winning probability for the aforementioned guessing game. 

In this work, we seek a deeper understanding of the uncertainty principle by considering a more general scenario than the typical guessing game and observing the conditions under which Bob's uncertainty vanishes. In particular, the generalisation we consider is to allow Bob to have additional information - possibly \textit{quantum} information - about Alice's measurement choice. This generalisation is closely related to recent proposals for quantum control experiments \cite{TernoQBS,TernoControl}. To elaborate, we note that Alice's random measurement choice in the guessing game can be implemented by preparing a qubit $R$ in the maximally mixed state $\rho_R = \mathbb{I}/2$ and then performing a unitary operation on $B$ conditioned on the state of $R$ (see Fig.~\ref{fig:circuit} below). In the generalised game that we consider, we allow $\rho_R$ to be a more general state, possibly with some coherence. As we discuss below, allowing for coherence in $\rho_R$ corresponds to giving Bob more information.

Our motivation for considering this scenario is to distinguish between uncertainty that is due to Bob's lack of information (LOI) versus uncertainty that is intrinsic or unavoidable. To help clarify these notions, we remark that a classical theory admits no intrinsic uncertainty. Classical here refers to commuting measurements that are jointly diagonal in one predefined basis. If Alice employed such measurements in the aforementioned guessing game, then the only way for her to prevent Bob from winning the game would be for her to add noise to her measurement outcomes, i.e., implement noisy measurements. Yet, we would classify Bob's uncertainty in this case as LOI uncertainty, as he simply lacks the information about the noise Alice adds. Hence, the arising uncertainty is clearly not an intrinsic feature of the measurements.
\\ \indent
Notice that preparing the register $R$ in the maximally mixed state $\rho_R = \mathbb{I}/2$ injects classical randomness into the protocol. It is unclear whether or not this randomness is ultimately responsible for the uncertainty principle, and this is a question we aim to answer. We emphasise that the scenario we consider differs from other variants of the uncertainty principle which derive bounds involving the purity or entropy of $\rho_B$ \cite{Renner, sanchez, berta2014entanglement, hall, christandl, renes, dupuis, Berta, liu, ColesUncertaintyRel, frank, furrer, ColesTripartite, luo, korzekwa}. 
\\ \indent
Interestingly we find that in the special case where Bob's system is a qubit ($d=2$), there is no intrinsic uncertainty but all the uncertainty is due to LOI. That is, if Bob has complete knowledge about the preparation of $R$ (i.e., $R$ is in a pure state), then his uncertainty vanishes. In contrast, for all dimensions $d>2$, we find that there is always some intrinsic uncertainty. That is, even with the full knowledge about the preparation of $R$, Bob cannot win the guessing game with unit probability. Before we discuss these results in detail, let us outline the physical setup.
\begin{figure}[t!]
  \centering
    \includegraphics[trim = 0 190 0 250, clip, width=0.5\textwidth]{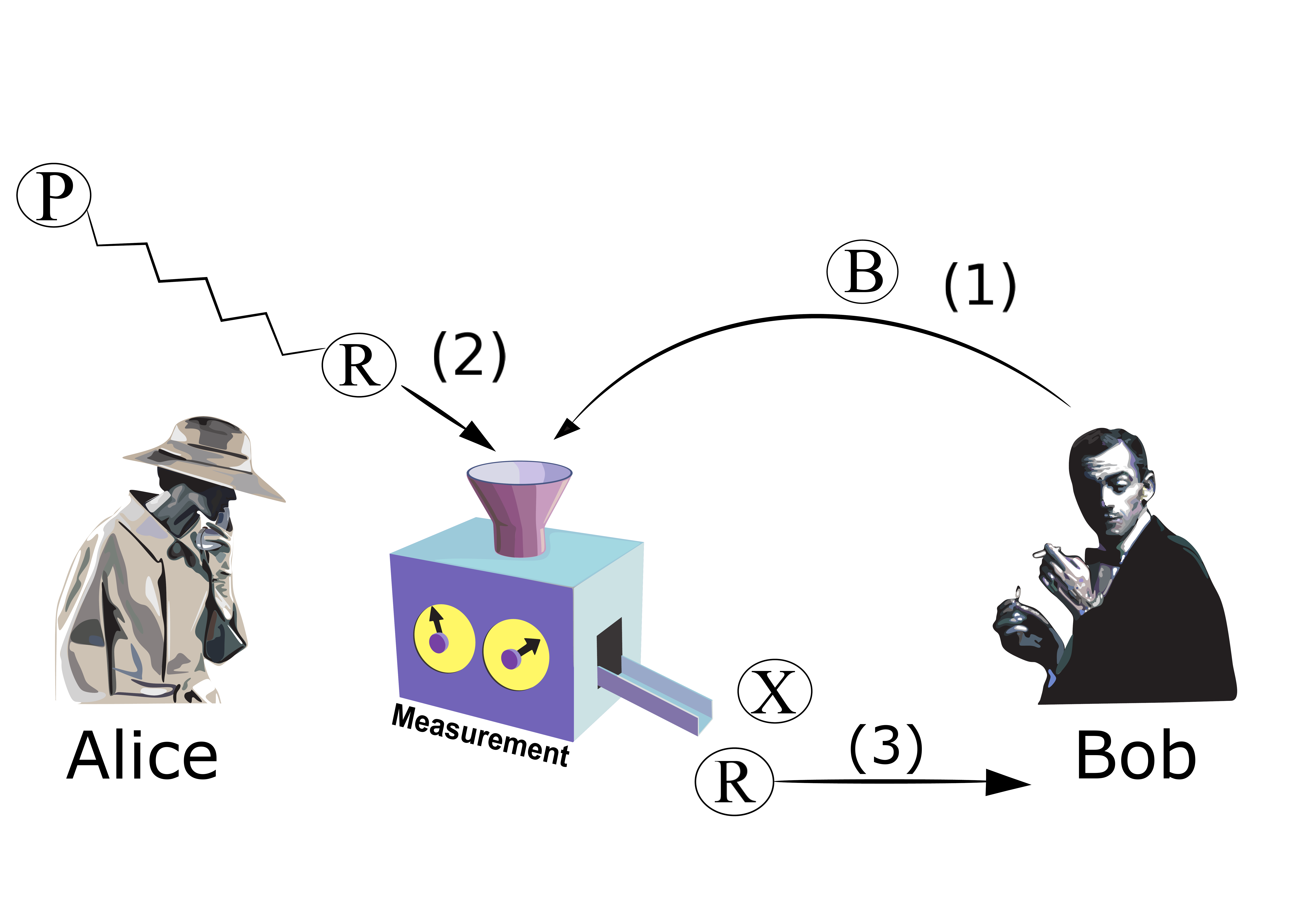}
    \caption{Uncertainty guessing game. 
The game runs as follows: (1) First, Bob prepares system $B$ in a state $\rho_B$ and sends it to Alice.  We show in Appendix~\ref{sec:appendixA} that Bob's best strategy is to prepare a pure state $\rho_B = \dya{\phi}_B$. (2) Second, Alice measures $B$ in a basis determined by the state of register $R$. (3) Finally, Alice obtains the classical outcome $X$ and sends $R$ to Bob. Bob can then measure $R$ in order to help him guess $X$. Note that $R$ may be initially prepared in a mixed state $\rho_R$, and Bob does not have access to the purifying system of $\rho_R$, denoted as $P$ in the figure. Hence, $P$ embodies Bob's lack of information in this game.}
    \label{fig:game}
\end{figure}

\section{Physical setup}
\subsection{Degrees of ignorance}
In this section we describe the generalised guessing game shown in Fig.~\ref{fig:game}. Here, Alice prepares a register system $R$ in some state $\rho_R$. Meanwhile Bob prepares system $B$ in state $\rho_B$ and sends it to Alice. Alice measures $B$ in a basis determined by the state of $R$. Then she passes $R$ to Bob, and he tries to guess her measurement outcome, possibly using the information stored in $R$. We are interested in understanding how much of Bob's uncertainty (i.e., his inability to win this game) is due to LOI and how much corresponds to intrinsic (or unavoidable) uncertainty. 

To better understand this, let us examine what Bob does and does not have access to in Fig.~\ref{fig:game}. Since $\rho_R$ is generally a mixed state, it can be purified by considering an additional system, $P$. Even though Bob is given access to $R$, we emphasise that he does not have access to $P$ in our guessing game. Hence, we can think of $P$ as representing Bob's LOI.

For example, consider the case when $\rho_R = \id/2$ is maximally mixed, which corresponds to the case where the measurement choice is a classical coin flip (i.e., the typical uncertainty game considered in the literature \cite{Renner}). The purification is a maximally entangled state such as
\begin{align}
\ket{\xi_{RP}} = \frac{1}{\sqrt{2}}\left(\ket{0}_R \ket{0}_P + \ket{1}_R \ket{1}_P\right)\, .
\end{align}
At the other extreme is the case where $\rho_R$ is pure, i.e.,
\begin{align}
\ket{\xi_{RP}}=\ket{\xi_R} \otimes \ket{\xi_P}
\end{align}
is a product state. We will take $\ket{\xi_R} = \frac{1}{\sqrt{2}}\left(\ket{0} + \ket{1}\right)$, i.e., we choose an equal superposition in correspondence with the idea that both measurements were previously chosen with equal probability. Intuitively, when the initial state is maximally entangled, then Bob will later suffer from a maximum LOI about $P$. However, in the case where the two systems are uncorrelated, Bob does not need $P$ at all. In other words, there is no LOI on his part, because $R$ is pure. 
\\ \indent
There are many ways to interpolate between these two extremes in terms of a measure of correlation between $R$ and $P$. Here, we choose one that is intuitive when we think about ``how much'' of $P$ Bob is actually lacking. Concretely, we imagine that apart from the classical coin $C$ (which is a part of $R$), $R$ and $P$ are actually comprised of many environmental subsystems $E_1,\ldots,E_n$, and we quantify Bob's LOI by the number of the environment systems that are part of $P$ instead of part of $R$.  
Specifically, we take
\begin{align}
\ket{\xi_{RP}} = \frac{1}{\sqrt{2}}\left(
\ket{0}_C \otimes \bigotimes_{i=1}^n \ket{\al}_{E_i}+
\ket{1}_C \otimes \bigotimes_{i=1}^n \ket{\bt}_{E_i}\right)\, ,
\label{eq:subsystems} 
\end{align}
where $RP = CE_1\ldots E_n$. The environments $E_{j}$'s are two-dimensional registers and $\abs{\braket{\al|\bt}} = 1-\epsilon$, with $\epsilon >0$ and $\epsilon \ll 1$ so that each individual $E_{j}$ holds very little information about the state of the coin $C$. However, we see that $\ip{\al}{\bt}^n \to 0$ as $n \rightarrow \infty$.
We thus see that for $n \rightarrow \infty$ and $R=C$, $P=E_1\ldots E_n$, we approach the extreme case of $R$ being essentially classical, and $\ket{\xi_{RP}}$ being
maximally entangled. This idea of approximating the notion of a classical register by ``copying'' information into a large number of environmental systems $E_j$ is due to Zurek~\cite{Zurek}. 

We can now interpolate between the two extremes by letting $R = CE_1\ldots E_j$ and $P = E_{j+1}\ldots E_n$. 
We have that
\begin{align}
\rho_R  = \frac{1}{2}(\dya{0}+\dya{1}+\gamma^* \dyad{0}{1} + \gamma \dyad{1}{0})\, ,
\label{eq:rhoR}
\end{align}
where
\begin{align}
\ket{0}_R	&:= \ket{0}_C\ot \bigotimes_{i=1}^j \ket{\al}_{E_i}\, ,\\
\ket{1}_R	&:= \ket{1}_C\ot \bigotimes_{i=1}^j \ket{\beta}_{E_i}\, ,\\
\gamma 		&= \ip{\al}{\bt}^{n-j}\,.
\label{eq:ngammameaning}
\end{align}
We see that $\abs{\gamma}$ increases monotonically with $j$, the number of environmental subsystems contained in $R$, and hence the number of subsystems to which Bob is given access later on. The extreme cases $\gamma =0$ and $\gamma = 1$ correspond respectively to $j=0$ and $j =n$ (again note that the number of environment subsystems is very large so that we always consider the limit $n \to \infty$).
In Appendix~\ref{sec:appendixA} we show that for the uncertainty game it is only the modulus of $\gamma$ that matters. Therefore, we will only consider the case of real and positive $\gamma$, i.e.~$\gamma \in [0,1]$.

\subsection{Uncertainty game}

Let us now revisit our uncertainty guessing game (see Fig.~\ref{fig:game} and Fig.~\ref{fig:circuit}) with a more detailed description. First, Bob prepares system $B$ in a state $\rho_B$ and sends it to Alice. Second, Alice measures $B$ and obtains the classical outcome $X$, with the measurement basis determined by the state of register $R$ given by: 
\begin{align}
\rho_R  = \frac{1}{2}(\dya{0}+\dya{1}+\gamma \dyad{0}{1} + \gamma \dyad{1}{0})\, .
\label{eq:rhoR2}
\end{align} 
Specifically, as depicted in Fig.~\ref{fig:circuit}, states $\ket{0}$ and $\ket{1}$ on $R$ are, respectively, associated with measuring in the standard basis and Fourier basis on $B$ (we have chosen maximally incompatible bases to maximise the ``inherent'' uncertainty). Next, Alice sends Bob the register $R$. Finally Bob measures $R$ to help him produce a guess for $X$. This defines a two-parameter family of uncertainty games which depend on: $d \in \{2,3,\ldots\}$, the number of possible outcomes (which fixes the dimension of the quantum state $\rho_B$ supplied by Bob and the dimension of the Fourier transform in Fig.~\ref{fig:circuit}) and $\gamma \in [0,1]$, describing the amount of information about $R$ that is held in $P$, or equivalently the amount of coherence in $R$.

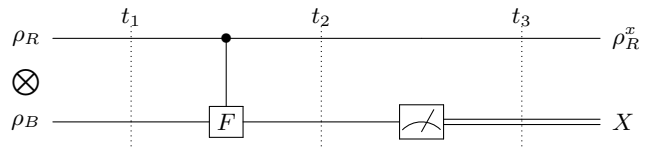
\begin{figure}[t]
\vspace{4mm}
\centerline{
\Qcircuit @C=3.2em @R=1.3em {
\lstick{\rho_R} & \ustick{t_1} \qw \ar@{.}[]+<0em,0.5em>;[d]+<0.em,-3em>  & \ctrl{2} &\ustick{t_2} \qw \ar@{.}[]+<0em,0.5em>;[d]+<0.em,-3em> &\qw &\ustick{t_3} \qw \ar@{.}[]+<0em,0.5em>;[d]+<0.em,-3em>  & \rstick{\rho_R^x} \qw  \\
\lstick{\raisebox{-1.5em}{$\bigotimes$}} & & & & & & \\
\lstick{\rho_B} & \qw & \gate{F} & \qw & \meter & \cw &\rstick{X} \cw
}}
\vspace{1mm}
\caption{Quantum circuit of the uncertainty game.
At time $t_1$, Alice's register $R$ and Bob's system $B$ are uncorrelated. 
We will assume that Alice measures in the standard basis and one additional basis depending on the state of register $R$. 
To allow for maximum intrinsic uncertainty, we take the other basis to be maximally incompatible. Here, we choose it to be the Fourier basis. Hence the two measurements correspond to measuring in two mutually unbiased bases. If $B$ is a qubit,
then this means that Alice measures in the standard and Hadamard basis, which are the two bases used in BB84 quantum key distribution. This basis choice is performed by Alice applying a controlled unitary between the two registers, leading to a correlated state at time $t_2$. Alice then measures $B$ to obtain the measurement outcome $X$. If the register $R$ is classical, then the two operations together correspond to performing a random measurement. If the register $R$ contains some non-zero coherence, then those operations describe a procedure which could be understood as a ``measurement in a superposition of two bases''. After time $t_3$, Alice sends $R$ to Bob. At this stage, $\rho_{RX} = \sum_x p_x \rho_R^x \otimes \ketbra{x}{x}_X$ is a qc-state. Bob can then make a measurement in order to distinguish the states $\rho_R^x$, i.e., to help him guess $X$. 
Note that Bob knows which states $\rho_R^x$ he wants to distinguish since he knows the form of the initial
state $\ket{\xi_{RP}}$ and the measurements Alice can perform.
}
\label{fig:circuit}
\end{figure}

\section{Methods}\label{sec:methods}
Here we provide a high level overview of the methods used to obtain the results presented in the next section. For complete analysis we refer the reader to the appendices.

After Alice has performed her measurement, at time $t_3$ in Fig.~\ref{fig:circuit} the resulting qc-state between the register $R$ and the outcome register $X$ is:
\begin{equation}
\rho_{RX} (\gamma,d, \rho_B) = \sum_{x} \tilde{\rho}^x_R (\gamma, d, \rho_B) \otimes\dya{x}_{X} \, ,
\end{equation}
where $\tilde{\rho}^x_R (\gamma, d, \rho_B) = p_x(d, \rho_B) \rho^x_R (\gamma, d, \rho_B)$ is the subnormalised post-measurement state of the register $R$ corresponding to the outcome $X=x$. In terms of Bob's input state $\rho_B$, this state has the form:
\begin{equation}
\begin{aligned}
\tilde{\rho}^x_R (\gamma, d, \rho_B) 	&= 
\frac{1}{2}   \left( \begin{array}{cc}
\bra{x}\rho_B\ket{x} & \gamma \bra{x}\rho_B F^\dag\ket{x} \\
\gamma \bra{x} F \rho_B\ket{x} & \bra{x}F\rho_BF^\dag \ket{x}
\end{array} \right)\, ,
\label{eq:rho_xmixed}
\end{aligned}
\end{equation}
as derived in Appendix~\ref{sec:appendixA}. Since Bob later gains access to register $R$, we see that in order to guess the resulting outcome $X=x$, Bob should try to determine which quantum state $\rho^x_R (\gamma, d, \rho_B)$ he has received. Hence, his guessing problem becomes equivalent to the problem of distinguishing quantum states $\{\rho^x_R (\gamma, d, \rho_B)\}$ occurring with probabilities $\{p_x(d, \rho_B)\}$.

The probability of Bob correctly discriminating those states with the optimal strategy, i.e., with the optimal measurement on $R$ (described by POVM elements $\{M_x\}$), is given by~\cite{Watrous}:
\begin{equation}
\pguess(\gamma, d, \rho_B) = \max_{\{M_x\}}\sum_{x=0}^{d-1}p_x (d, \rho_B) \Tr[M_x \rho^x_R (\gamma, d, \rho_B)] \, .
\label{eq:pguess1}
\end{equation}
In Appendix~\ref{sec:appendixA} we show that to achieve $\pguessmax(\gamma, d)$, the guessing probability optimised over input states $\rho_B$, it is sufficient to consider only pure input states $\rho_B = \dya{\phi}_B$. Hence, the maximum value of $\pguess(\gamma, d, \rho_B)$ for a given $\gamma$ and $d$ is the result of optimising the guessing probability over all input states $\ket{\phi}_B$ of Bob (for convenience we will often omit the subscript ``$B$'' from $\ket{\phi}_B$). That is,
\begin{equation}
\pguessmax(\gamma,d) = \max_{\ket{\phi}} \pguess(\gamma,d,\ket{\phi})\,.
\label{eq:pguessmax}
\end{equation}
Solving this optimisation problem is not an easy task. Note that the function which we want to optimise over all the POVM elements $\{M_x\}$ in Eq.~\eqref{eq:pguess1} is linear in those operators. Hence, for a specific input state $\ket{\phi}_B$ the optimisation can be performed using techniques of semi-definite programming. However, the above optimisation problem in Eq.~\eqref{eq:pguessmax} involves optimisation both over POVM elements and input states $\ket{\phi}_B$. Clearly, $\tilde{\rho}^x_R (\gamma, d, \ket{\phi}_B)$ is quadratic in $\ket{\phi}_B$. Note that this problem can be made linear in the input state by again considering optimisation over all mixed states $\rho_B$, i.e. our problem is then linear in $\rho_B$. However, the full problem of optimising over both $\{M_x\}$ and $\rho_B$:
\begin{equation}
\pguessmax(\gamma, d) = \max_{\rho_B} \max_{\{M_x\}}\sum_{x=0}^{d-1}p_x (d, \rho_B) \Tr[M_x \rho^x_R (\gamma, d, \rho_B)]
\end{equation}
turns out not to be jointly concave in both of those variables and so cannot be solved using techniques of convex optimisation.
\subsection{Two-dimensional game}
Nevertheless, we can solve this problem analytically for $d=2$. For this case, we derived our result (stated below in Theorem~\ref{pguessmaxd2}) by noting that the problem of optimising over the POVM elements in Eq.~\eqref{eq:pguess1} (for fixed states $\{\rho^x_R\}$ occuring with fixed probabilities $\{p_x\}$) has been solved analytically by Helstrom~\cite{Helstrom}:
\begin{equation}
\pguess(\gamma, d=2, \rho_B) = \frac{1}{2} \Big(1 + \norm{ \tilde{\rho}^0_R (\gamma, \rho_B) - \tilde{\rho}^1_R(\gamma, \rho_B)   }_1\Big)\, ,
\label{eq:pguessmaxqubithelstrom}
\end{equation}
where $\norm{\cdot}_1$ denotes the trace norm and we have omitted the $d=2$ argument in $\tilde{\rho}^0_R$ and $\tilde{\rho}^1_R$. In this way we obtain an expression for $\pguess(\gamma, d=2, \rho_B)$ which we then analytically optimise over the input states $\rho_B$ for every value of $\gamma \in [0,1]$ to obtain $\pguessmax(\gamma, d=2)$ (see Appendix~\ref{sec:appendixB}). For completeness, we still optimise over all qubit states $\rho_B$, not only the pure ones. This allows us to find all the qubit input states that achieve $\pguessmax(\gamma, d=2)$.
\subsection{Higher-dimensional games}\label{sec:highdimgames}
For $d>2$ we cannot calculate $\pguessmax(\gamma, d>2)$ analytically, since there exists no known analytical expression for the probability of correctly distinguishing more than two quantum states. However, we can find $\pguess(\gamma, d, \ket{\phi})$ for an arbitrary state $\ket{\phi}$ using techniques from semi-definite programming. We obtain numerical lower bounds for $\pguessmax(\gamma, d)$, shown in Fig.~\ref{fig:p_guess_max_vs_gamma_different_d1}, by solving a semi-definite programme for $\pguess(\gamma, d, \ket{\phi})$ and numerically searching for local maxima of $\pguess(\gamma, d, \ket{\phi})$ with respect to the input state $\ket{\phi}$ using the Nelder-Mead algorithm. We repeat the search multiple times with a randomly generated initial state in each run, that is drawn uniformly from unit vectors on $\mathbb{C}^d$.

\section{Results}

In Section~\ref{sec:intro} we discussed that classical uncertainty arises solely from LOI. Here we show that even in the quantum case, uncertainty can in part be understood as a LOI that Bob has - namely a lack of quantum information about the register $P$. For the case of $d=2$ and BB84 measurements as they are used in quantum key distribution (QKD), this effect is indeed dramatic. We find (see Theorem~\ref{pguessmaxd2} below) that there is no more uncertainty at all in the case where $R$ is pure and $P$ is uncorrelated, meaning that Bob does not suffer from any LOI.

First, we consider the typical uncertainty game where $R$ is a classical coin, i.e., $R$ and $P$ are maximally entangled ($\gamma = 0$). In this case the maximum value of the guessing probability (for completeness derived in Appendix~\ref{sec:appendixC}) is given by:
\begin{equation}
\pguessmax(\gamma=0, d) = \frac{1}{2}\left(1 + \frac{1}{\sqrt{d}}\right)\, .
\label{eq:methodspguessmaxgamma0}
\end{equation}
The states $\rho_B$ that achieve the guessing probability of Eq.~\eqref{eq:methodspguessmaxgamma0} are the pure states
\begin{equation}
\ket{\phi_{jl}} := c(\ket{j} + \omega^{jl}F^\dag\ket{l})\, ,
\label{eq:optimalstatesphijl}
\end{equation}
where $c = \sqrt{\sqrt{d}/(2 \sqrt{d}+ 2 )}$ is the normalisation constant, $F$ denotes a quantum Fourier transform defined in Appendix~\ref{sec:appendixA}, $\omega$ is the $d$-th root of unity and $j$ and $l$ are integer indices that lie in the range $\{0, 1, \ldots, d-1\}$ so that the pure states $\ket{j}$ and $\ket{l}$ denote the corresponding eigenstates of the standard basis. The states defined in Eq.~\eqref{eq:optimalstatesphijl} are the states where the dominant classical outcome for the measurement is $j$ in the standard basis and $l$ in the Fourier basis.

Now we consider the more general case where $R$ may have some coherence. For $d = 2$ we have found the analytical solution for all $\gamma \in [0, 1]$. In this case the guessing probability is equal to the probability of successfully distinguishing the two possible post-measurement states of the basis register, namely $\rho^0_R$ and $\rho^1_R$ corresponding to outcomes 0 and 1 respectively (see Fig.~\ref{fig:circuit}).
\begin{theorem}\thlabel{pguessmaxd2}
The maximum guessing probability for a two-dimensional game ($d=2$), optimised over all input states $\rho_B$ is given by:
\begin{equation}
\pguessmax(\gamma, d=2) = \frac{1}{2} \left(1 + \frac{\sqrt{2+2\gamma^2}}{2}\right)\, .
\label{eq:pguessmaxqubit1}
\end{equation}
In particular, for $\gamma = 1$ one achieves perfect guessing, that is $\pguessmax(\gamma=1, d=2) = 1$.
\end{theorem}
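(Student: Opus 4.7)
\emph{Proof plan.} The plan is to combine the Helstrom formula, Eq.~\eqref{eq:pguessmaxqubithelstrom}, with the reduction to pure inputs (Appendix~\ref{sec:appendixA}) and to turn the maximisation into an elementary optimisation over the Bloch sphere. Write $\Delta(\gamma,\ket{\phi}) := \tilde{\rho}^0_R(\gamma,\ket{\phi}) - \tilde{\rho}^1_R(\gamma,\ket{\phi})$, so that $\pguess(\gamma,d=2,\ket{\phi}) = \tfrac{1}{2}(1 + \|\Delta\|_1)$; the task is to maximise $\|\Delta\|_1$ over pure qubit states $\ket{\phi}$. Parametrise $\ket{\phi}$ by its Bloch vector $\vec{n}=(n_x,n_y,n_z)$ with $n_x^2+n_y^2+n_z^2=1$.

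Using Eq.~\eqref{eq:rho_xmixed} with $F$ the Hadamard, I would compute the four entries of $\Delta$ in closed form. The diagonal entries depend only on $n_z$ and $n_x$ (they are $\tfrac{1}{2}n_z$ and $\tfrac{1}{2}n_x$ respectively), while the off-diagonal entry is proportional to $\gamma$ and to a simple expression in $n_y$. This is a routine calculation once one notes that $\bra{x}\rho_B\ket{x}$, $\bra{x}F\rho_B F^\dagger \ket{x}$ and $\bra{x}\rho_B F^\dagger\ket{x}$ are all linear functions of the Bloch components of $\ket{\phi}$.

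Since $\Delta$ is Hermitian and $2\times 2$, its trace norm equals $\sqrt{T^2-4D}$ whenever $\det\Delta \leq 0$, where $T=\Tr\Delta$ and $D=\det\Delta$. I would then rewrite $T^2-4D$ in terms of $(n_x,n_y,n_z)$; the explicit form I expect is
\begin{equation}
T^2-4D \;=\; \tfrac{1}{4}(n_z-n_x)^2 + \tfrac{\gamma^2}{2}\bigl(1+n_y^2\bigr).
\end{equation}
Using $n_x^2+n_z^2=1-n_y^2$, one has $(n_z-n_x)^2 = (1-n_y^2)-2n_xn_z \leq 2(1-n_y^2)$, with equality iff $n_x=-n_z$. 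Substituting gives $T^2-4D \leq \tfrac{1}{2}(1+\gamma^2) + \tfrac{1}{2}(\gamma^2-1)n_y^2$, which, because $\gamma\leq 1$, is maximised at $n_y=0$ with value $(1+\gamma^2)/2$. Therefore $\|\Delta\|_1 \leq \tfrac{1}{2}\sqrt{2+2\gamma^2}$, which yields Eq.~\eqref{eq:pguessmaxqubit1} after substitution into the Helstrom formula. Setting $\gamma=1$ gives $\pguessmax=1$.

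Two small points remain. First, I would verify that the bound is attained: the choice $n_y=0$, $n_x=-n_z=1/\sqrt{2}$ gives $T=0$ and $D=-(1+\gamma^2)/8$, so $\|\Delta\|_1=2\sqrt{-D}=\tfrac{1}{2}\sqrt{2+2\gamma^2}$, and the assumption $D\leq 0$ used in the formula $\|\Delta\|_1=\sqrt{T^2-4D}$ is consistent with the maximiser (indeed $D\leq 0$ follows automatically since the maximum of $T^2-4D$ exceeds $T^2$ everywhere on the sphere). Second, the optimising Bloch vector lies in the $x$-$z$ plane and corresponds, up to sign, to the BB84 eigenstates $\ket{\phi_{jl}}$ of Eq.~\eqref{eq:optimalstatesphijl} specialised to $d=2$, providing a sanity check against the $\gamma=0$ formula Eq.~\eqref{eq:methodspguessmaxgamma0}. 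The main obstacle is purely bookkeeping in the matrix-entry computation; the maximisation itself reduces, after the Bloch-sphere change of variables, to a one-line argument.
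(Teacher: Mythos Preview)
Your approach is essentially the paper's: Helstrom formula, Bloch-sphere parametrisation of the input, explicit computation of $\Delta$ (the paper calls it $G$), and optimisation of the eigenvalue expression. The matrix entries and the quantity $T^2-4D=\tfrac14(n_x-n_z)^2+\tfrac{\gamma^2}{2}(1+n_y^2)$ match the paper's computation, and your maximisation is a slightly slicker version of the paper's case~(b).

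There is one genuine (though easily repaired) slip. Your parenthetical claim that ``$D\leq 0$ follows automatically since the maximum of $T^2-4D$ exceeds $T^2$ everywhere on the sphere'' is false: at $n_x=n_z=1/\sqrt{2}$, $n_y=0$ one has $D=\tfrac14 n_xn_z-\tfrac{\gamma^2}{8}=\tfrac18-\tfrac{\gamma^2}{8}$, which is positive for $\gamma<1$. So the formula $\|\Delta\|_1=\sqrt{T^2-4D}$ does not hold globally, and your upper bound on $\sqrt{T^2-4D}$ does not by itself bound $\|\Delta\|_1$ on the region $D>0$. The paper handles this by an explicit case split: when the eigenvalues have the same sign, $\|\Delta\|_1=|T|=\tfrac12|n_x+n_z|\leq \sqrt{2}/2\leq \tfrac12\sqrt{2+2\gamma^2}$, so that region never beats the case-(b) maximum. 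Adding this one-line observation closes the gap; the rest of your argument is correct and coincides with the paper's.
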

It is also possible to express this guessing probability in terms of the purity of the basis register:
\begin{equation}
\pguessmax(\gamma, d=2) = \frac{1}{2} \left(1 + \sqrt{\Tr[\rho_R^2]}\right)\, .
\end{equation}
For all $\gamma \in [0,1]$, this guessing probability can be achieved by one of two orthogonal input states of Bob, $\ket{\phi_{01}} = c(\ket{0} + \ket{-})$ and $\ket{\phi_{10}} = c(\ket{1} + \ket{+})$, which are mapped by the Hadamard transformation onto each other. (For $\gamma=0$ this guessing probability can of course also be achieved by $\ket{\phi_{00}}$ and $\ket{\phi_{11}}$, as then Eq.~\eqref{eq:pguessmaxqubit1} reduces to Eq.~\eqref{eq:methodspguessmaxgamma0}. For $\gamma=1$ the optimal input states form a continuous one-parameter family, see Appendix~\ref{sec:appendixB}.) 

From Eq.~\eqref{eq:pguessmaxqubit1} we see that Bob can achieve perfect guessing probability for the case when $R$ is uncorrelated from $P$ (and so $P$ holds no information about $R$ and there is no LOI about the measurement process on Bob's side). This is connected to the fact, that for ${\gamma = 1}$ and a suitable choice of input state $\rho_B$, the joint state  $\rho_{RB}$ becomes maximally entangled at time $t_2$ just before Alice's measurement in Fig.~\ref{fig:circuit} (see Appendix~\ref{sec:appendixD} below for discussion of this connection). The above results for $d=2$ are derived in Appendix~\ref{sec:appendixB}.

Now it is interesting to ask what happens to the measurement uncertainty in the game with more than two measurement outcomes in higher dimension. 
It is intuitive that the dramatic effect we see for $d=2$ should be less prominent here. After all, Bob is trying to guess measurement outcomes that
can take on $d$ values, while $R$ and $P$ each remain two-dimensional and can hence only contain limited information about the outcomes. 
We first make this intuition precise in the following theorem.
\begin{theorem}\thlabel{perfectguessinghighdim}
For $d$-dimensional games with any $d>2$ it is not possible to achieve perfect guessing, i.e.,
\begin{equation}
\pguessmax(\gamma, d>2) < 1 \, , \quad\quad \forall \hspace{2pt}\gamma \in [0,1]\,.
\label{eq:pguessmaxqubit111}
\end{equation}
\end{theorem}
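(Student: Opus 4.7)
My plan is a proof by contradiction. Suppose $\pguessmax(\gamma, d) = 1$ for some $d > 2$ and $\gamma \in [0,1]$. By Eq.~(\ref{eq:pguess1}), Bob's task reduces to perfectly discriminating the ensemble $\{\tilde{\rho}_R^x\}_{x=0}^{d-1}$ on the two-dimensional register $R$. Perfect discrimination forces the states with $p_x > 0$ to have mutually orthogonal supports, and since $\dim R = 2$, the set $S = \{x : p_x > 0\}$ can have size at most $2$. The proof then splits into the cases $|S| = 1$ and $|S| = 2$.

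The $|S| = 1$ case is immediate: $p_x = 0$ for every $x \neq x_1$ forces simultaneously $\langle x|\phi\rangle = 0$ and $\langle x|F|\phi\rangle = 0$, but the first identity collapses $\ket{\phi}$ to a computational basis state $e^{i\theta}\ket{x_1}$, after which $\langle x|F|\phi\rangle = e^{i\theta}\omega^{x x_1}/\sqrt{d}$ is nonzero for every $x$, contradicting the second. For the $|S| = 2$ case with $\gamma < 1$, I would read off Eq.~(\ref{eq:rho_xmixed}) the determinant $\det\tilde{\rho}_R^x = \tfrac{1-\gamma^2}{4}\abs{\langle x|\phi\rangle}^2\abs{\langle x|F|\phi\rangle}^2$, which shows that whenever both matrix elements are nonzero, $\tilde{\rho}_R^x$ has full rank on $R$ and cannot sit in a subspace orthogonal to any other nonzero state. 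Requiring each $\tilde{\rho}_R^{x_i}$ to be rank $1$ with support along $\ket{0}_R$ or $\ket{1}_R$, combined with $\langle x|\phi\rangle = 0$ outside $S$, collapses $\ket{\phi}$ to a single basis state and returns us to the already-impossible $|S|=1$ case.

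The delicate regime is $\gamma = 1$. Here each $\tilde{\rho}_R^x$ is automatically rank $\leq 1$, proportional to $\ket{v_x}\bra{v_x}$ with $\ket{v_x} = \langle x|\phi\rangle\ket{0}_R + \langle x|F|\phi\rangle\ket{1}_R$, and one needs $\langle v_{x_1}|v_{x_2}\rangle = 0$ simultaneously with both $\ket{\phi}$ and $F\ket{\phi}$ supported on $\{\ket{x_1}, \ket{x_2}\}_B$. Writing $\ket{\phi} = a\ket{x_1} + b\ket{x_2}$, the support condition on $F\ket{\phi}$ reads $a + b\omega^{(x_2-x_1)y} = 0$ for all $d - 2$ indices $y \notin S$, so $\omega^{(x_2-x_1)y}$ must be constant on those indices. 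A short counting argument, using $g := \gcd(x_2 - x_1, d) \leq d/2$, forces the $d-2$ excluded indices to lie in a single residue class mod $d/g$, hence $d - 2 \leq g \leq d/2$, i.e.\ $d \leq 4$, which closes the proof for every $d \geq 5$. The main obstacle, and the one step that is not fully structural, is the residual finite check for $d \in \{3,4\}$: only a handful of admissible pairs $(x_1, x_2)$ survive the counting bound, and for each of them I would compute $\langle v_{x_1}|v_{x_2}\rangle$ explicitly and verify it is nonzero (in the $d = 4$ cases one in fact finds $\ket{v_{x_1}} = \pm\ket{v_{x_2}}$, a particularly strong failure of orthogonality). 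This small-dimension verification is elementary but unavoidable.
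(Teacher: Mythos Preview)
Your approach follows the same architecture as the paper's proof: contradiction, then the observation that perfect discrimination on the two-dimensional register $R$ forces all but at most two of the $\tilde\rho^x_R$ to vanish, and finally an analysis of the resulting support constraints on $\ket\phi$ and $F\ket\phi$. The paper does not split on $\gamma$; it goes directly to the conditions $\langle x|\phi\rangle=0$ and $\langle x|F|\phi\rangle=0$ for $x$ outside the surviving pair, derives the phase constraint $\alpha_0^*+\alpha_1^*\,\omega^{x(x_0-x_1)}=0$ for all such $x$, argues that for $d>3$ this forces $(x_1-x_0)/d\in\mathbb{Z}$, and leaves only $d=3$ for an explicit case check.

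Your execution differs in two places. First, your determinant observation for $\gamma<1$ disposes of that regime without any Fourier analysis, which is a clean shortcut the paper does not take. Second, your gcd-based counting for $\gamma=1$ is actually more careful than the paper's: the paper's assertion that the phase constraint is unsatisfiable for every $d>3$ overlooks the $d=4$ configurations with $|x_1-x_0|=2$ (for instance $\ket\phi=(\ket0+\ket2)/\sqrt2$, an eigenvector of $F$), where the support conditions \emph{are} simultaneously met. Your bound $d-2\le g\le d/2$ correctly isolates $d\in\{3,4\}$ as the residual cases, and your explicit verification that the surviving vectors $\ket{v_{x_1}},\ket{v_{x_2}}$ turn out to be parallel rather than orthogonal in the admissible $d=4$ configurations closes that gap. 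So the route is the same, but your version is tighter.
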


Crucially, however, coherence in register $R$ \emph{always} facilitates guessing.
\begin{theorem}\thlabel{coherencevsnocoherence}
For $d$-dimensional games with $d$ being arbitrary, the maximum guessing probability when $R$ has any non-zero amount of coherence is always strictly greater than the case of maximally mixed $R$. That is,~for all $\gamma' > 0$
\begin{equation}
\pguessmax(\gamma = \gamma', d) > \pguessmax(\gamma = 0, d) \, , \quad\quad \forall \hspace{2pt}d \ge 2\,.
\label{eq:pguessmaxqubit112}
\end{equation}
\end{theorem}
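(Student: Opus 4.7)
The plan is to exhibit, for every $d\geq 2$ and every $\gamma'>0$, a specific Bob-strategy whose guessing probability strictly exceeds $\pguessmax(0,d)=\tfrac{1}{2}(1+1/\sqrt d)$. I will fix Bob's input to one of the $\gamma=0$-optimal states, namely $\rho_B=\dya{\phi_{01}}$ from Eq.~\eqref{eq:optimalstatesphijl} with $j=0,\,l=1$, and restrict his POVM to the two-outcome family $M_0+M_1=I$, $M_x=0$ for $x\neq 0,1$. Such POVMs are a subset of all feasible POVMs, so it suffices to show that the resulting binary-discrimination (Helstrom) bound, call it $\pguess_2(\gamma)$, already beats $\pguessmax(0,d)$ for every $\gamma>0$.

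Concretely, I would first compute the relevant matrix elements of $\tilde\rho^x_R(\gamma,d,\rho_B)$ from Eq.~\eqref{eq:rho_xmixed}. Writing $s:=1/\sqrt d$, a direct calculation using $\bra{x}F^\dag\ket{l}=\omega^{-xl}/\sqrt d$ gives $a_0=c_1=c^2(1+s)^2$, $a_x=c_x=c^2s^2$ for all other $x$, and off-diagonals $b_0=c^2s(1+s)$, $b_1=c^2s(1+s)\,\omega$, where $b_x:=\bra{x}F\rho_B\ket{x}$. The Helstrom formula for sub-normalised states then gives
\begin{equation*}
\pguess_2(\gamma)=\frac{p_0+p_1+\norm{\tilde\rho^0_R(\gamma)-\tilde\rho^1_R(\gamma)}_1}{2}.
\end{equation*}
The difference $\tilde\rho^0_R(\gamma)-\tilde\rho^1_R(\gamma)$ is a Hermitian $2\times 2$ matrix with diagonal entries $\pm\tfrac{c^2(1+2s)}{2}$ and off-diagonal proportional to $\gamma(1-\omega)$, so computing its eigenvalues yields
\begin{equation*}
\norm{\tilde\rho^0_R(\gamma)-\tilde\rho^1_R(\gamma)}_1=c^2\sqrt{(1+2s)^2+4\gamma^2 s^2(1+s)^2\sin^2(\pi/d)},
\end{equation*}
using $|1-\omega|^2=4\sin^2(\pi/d)$.

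A short algebraic check using $c^2=1/[2(1+s)]$ confirms $\pguess_2(0)=c^2(1+s)^2=(1+s)/2=\pguessmax(0,d)$. For any $\gamma>0$ and any $d\geq 2$, the extra term $4\gamma^2 s^2(1+s)^2\sin^2(\pi/d)$ under the square root is strictly positive, since $\sin(\pi/d)>0$ whenever $d\geq 2$. Hence $\pguess_2(\gamma)>\pguess_2(0)=\pguessmax(0,d)$, and we conclude
\begin{equation*}
\pguessmax(\gamma',d)\geq\pguess(\gamma',d,\dya{\phi_{01}})\geq\pguess_2(\gamma')>\pguessmax(0,d),
\end{equation*}
which is the desired strict inequality.

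The main nontrivial step is to verify that the $\omega$-phases in the off-diagonals do not accidentally cancel in $\tilde\rho^0_R-\tilde\rho^1_R$: if one had $b_0=b_1$, the coherence contribution to the trace norm would disappear and the bound would be trivial. Here the nonvanishing is controlled by the factor $1-\omega$ with $|1-\omega|=2\sin(\pi/d)>0$ for all $d\geq 2$, so the proof works uniformly in the dimension. Conceptually the picture is simple: the state $\ket{\phi_{01}}$ that is optimal at $\gamma=0$ is also a legitimate input at every $\gamma$, and the added coherence in $R$ promotes what was a purely classical readout of the basis label into a genuine Helstrom measurement that strictly improves the discrimination of the two dominant post-measurement states $\tilde\rho^0_R$ and $\tilde\rho^1_R$.
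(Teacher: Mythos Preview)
Your proof is correct and follows the same core idea as the paper: take a $\gamma=0$-optimal state $\ket{\phi_{jl}}$, restrict Bob's POVM to the two outcomes $j,l$, and observe that the resulting Helstrom value is strictly increasing in $\gamma$ because the off-diagonal contribution is governed by $|1-\omega^{j^2-l^2}|>0$. Your specific choice $j=0$, $l=1$ gives $j^2-l^2=-1$, which is never a multiple of $d$, so $\sin(\pi/d)>0$ for all $d\geq 2$ and the argument goes through uniformly. Your explicit formula for the trace norm and the verification $\pguess_2(0)=c^2(1+s)^2=(1+s)/2$ are both correct.

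The one noteworthy difference from the paper's route is economy. The paper first establishes, via an SDP-duality argument (its Lemma~\ref{lookingon2}), that the two-outcome POVM is in fact \emph{optimal} for every $\ket{\phi_{jl}}$ with $j\neq l$; this involves constructing a dual-feasible matrix $Q'$ and checking the somewhat laborious positivity conditions $Q'\geq\tilde\rho^k_R$ for all $k$. Only then does it read off the exact formula for $\pguess(\gamma,d,\ket{\phi_{jl}})$ and deduce Theorem~\ref{coherencevsnocoherence} from its monotonicity. You sidestep this entirely by using the two-outcome Helstrom value merely as a \emph{lower bound} on $\pguess$, which is all that the strict inequality requires. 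What the paper's longer route buys is the stronger statement of Lemma~\ref{monotonicity} (genuine monotonicity of the \emph{optimal} $\pguess$ for these states in $\gamma$); what your shortcut buys is a self-contained proof of Theorem~\ref{coherencevsnocoherence} that does not depend on any duality machinery.
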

Moreover, we show that for a subclass of the input states that are optimal for $\gamma=0$, the guessing probability monotonically increases with $\gamma$.  Specific values of $\pguessmax(\gamma, d)$ are lower bounded numerically. Those results are depicted in Fig.~\ref{fig:p_guess_max_vs_gamma_different_d1}.
\begin{figure}[t]
\centering
    \includegraphics[width=0.48\textwidth]{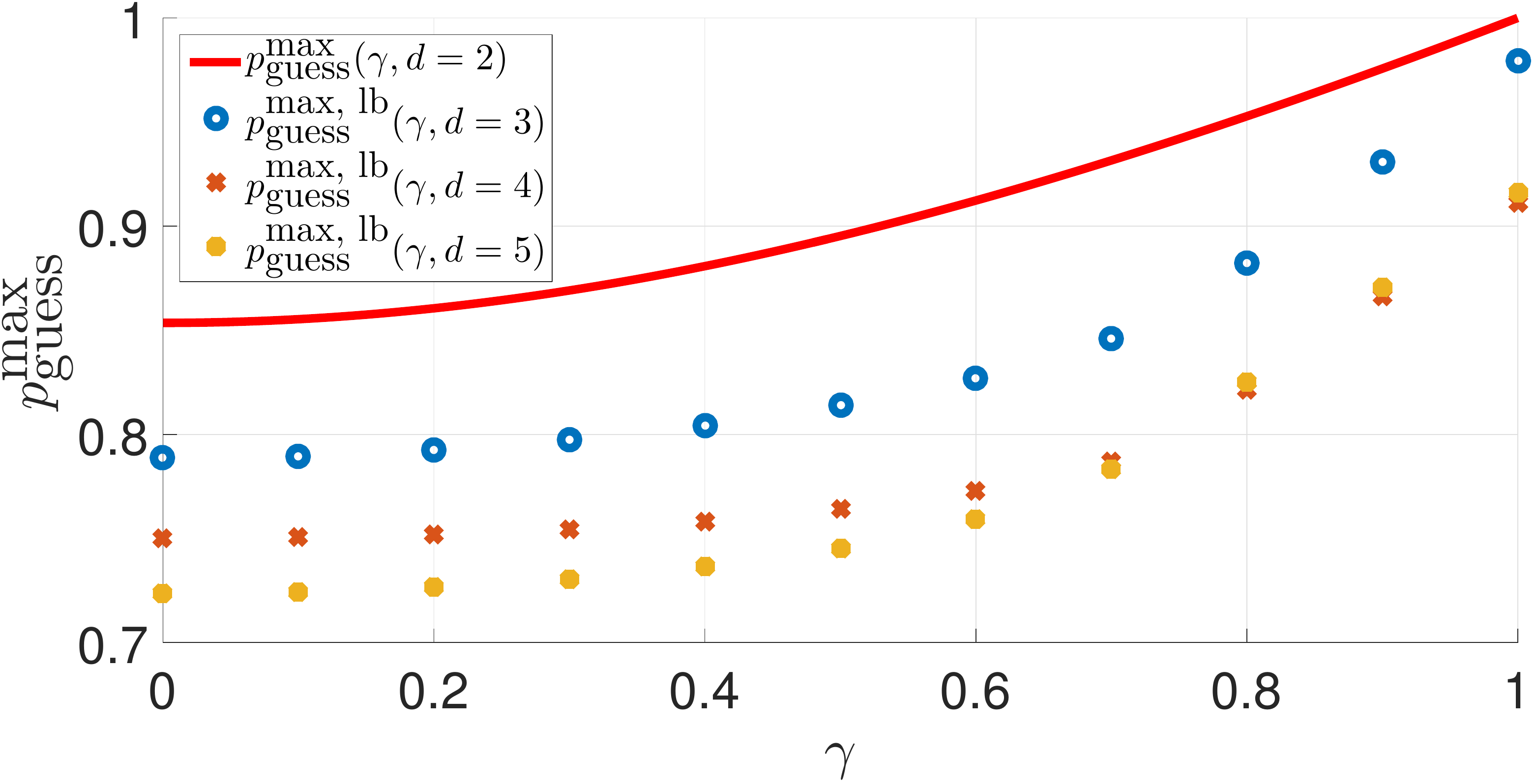}
    \caption{The optimal guessing probabilities $\pguessmax(\gamma,d)$ as a function of $\gamma$ for different $d$. The solid line corresponds to the analytical solution $\pguessmax(\gamma, d=2)$ for a two-dimensional game. The remaining data corresponds to the numerical lower bounds $p_{\text{guess}}^{\text{max, lb}}(\gamma, d)$ for $d= 3,4,5$. For $\gamma = 0$ the numerical values coincide with the analytical solution $\pguessmax(\gamma=0, d) = \frac{1}{2}\left(1 + \frac{1}{\sqrt{d}}\right)$. The crossing of the dotted lines corresponding to $d=4$ and $d=5$ is discussed in Section~\ref{sec:discussion}.}
		\label{fig:p_guess_max_vs_gamma_different_d1}
\end{figure}

\section{Discussion}\label{sec:discussion}

We have shown that quantum preparation uncertainty is not always inherent to the measurement process but on the contrary it depends on the amount of information that one has about this process. In particular, for $d=2$, if Bob has all the information about the measurement process, then he can perfectly predict the measurement outcome. In the cryptographic protocols that use BB84 states, $\rho_R$ is a maximally mixed state. Hence, from the perspective of cryptographic security, this shows that it is important for the purification of $\rho_R$ to remain inaccessible to the adversary. In particular, the more of the purification $P$ becomes incorporated into $R$, the larger the guessing probability becomes and so the more the security of our cryptographic protocols becomes compromised. Passive encoding schemes \cite{curty2010}, which generate the QKD signal states by performing a measurement on a quantum register (analogous to our $R$), would especially need to consider this issue.

On the other hand, we found that there is always some unavoidable uncertainty for guessing games in higher dimensions, $d>2$. This result is somewhat intuitive when one considers that our guessing game allows for two measurements, and hence system $R$ is only two-dimensional. The intuition behind this unavoidable uncertainty is that the state $\rho_R$, in which the information about the measurement outcome becomes encoded, is always a qubit, while the number of outcomes is $d$. Hence, even if Bob inputs a state that results in entanglement between the two systems, this entanglement lives in a two-dimensional subspace of the $d$-dimensional space $\HC_B$. Therefore, the joint state cannot be maximally entangled and since the Fourier transformation applied to elements of the standard basis generates a basis that is unbiased to it, the correlations before the measurement of Alice do not align with the standard basis in which the measurement is performed. This fact can also be seen by noting that perfect guessing could only occur if only two of the resulting outcomes had non-zero probability and if those outcomes produced orthogonal post-measurement states of the register $R$. It turns out that all those conditions cannot be met simultaneously.

The crossing of the dotted lines corresponding to $d=4$ and $d=5$ in Fig.~\ref{fig:p_guess_max_vs_gamma_different_d1} is an interesting phenomenon. We have investigated it extensively using multiple methods and numerical solvers on which we now elaborate. As mentioned in Section~\ref{sec:methods} the problem of optimisation over both input states and measurements is in general very hard because the optimisation problem that we face is not convex. That is we can have no guarantee that the solution that we find is the global maximum. Therefore the numerical results are just the lower bounds on the $\pguessmax$, as they represent achievable values of $\pguessmax$ that have been found. Nevertheless we have used multiple methods to look for these optimal bounds. Apart from the method described in Section~\ref{sec:highdimgames} (where part of the data was checked by rerunning the programme with multiple numerical solvers), we have tried imposing a net over the statespace and solving the semi-definite programme over the measurements for each of those states. Then the procedure was repeated with a denser net in the region where the highest guessing probability has been found. This step of ``zooming-in'' has then been repeated multiple times. Finally we have also used the ``Penlab'' solver, which can also provide achievability bounds for non-linear problems. Application of those other methods however resulted in much worse bounds and so they shed no light on the nature of the crossing in Fig.~\ref{fig:p_guess_max_vs_gamma_different_d1}.

Nevertheless, despite the fact that we only find achievable bounds, we believe that the crossing seen in Fig.~\ref{fig:p_guess_max_vs_gamma_different_d1} could in principle arise even for the exact solution. We note that while asymptotically we expect $\pguessmax(\gamma, d)$ to tend to 0.5 as $d$ tends to infinity, it is possible for $\pguessmax(\gamma, d)$ to be larger for $d=5$ than for $d=4$ above some threshold $\gamma = \gamma_0$. As we mentioned earlier, the optimal guessing probability depends on the optimal correlations between two-dimensional register $R$ and $d$-dimensional register $B$. The resulting state is asymmetric and so it is possible that certain favourable correlations are possible for $d=5$, while not possible for $d=4$. The complexity of the problem can be seen by looking at the Schmidt coefficients of the joint state of registers $R$ and $B$ at time $t_2$ in Fig.~\ref{fig:circuit}. For $d=2$ and $\gamma = 1$ the optimal input states are precisely the ones that lead to a maximally entangled state between those two registers at time $t_2$. One might intuitively guess that also for $d>2$ forming maximally entangled states within the two-dimensional subspace of $B$ will lead to the optimal guessing probability for $\gamma=1$. This turns out not be sufficient: we checked specific states that lead to maximal entanglement in dimensions $d = 3, 4, 5$ and their performance is suboptimal. At the same time, all the optimal input states found numerically that achieve $p_{\text{guess}}^{\text{max, lb}}(\gamma=1, d)$ for $d = 3, 4, 5$ lead to unbalanced Schmidt coefficients. While we have found multiple states that achieve $p_{\text{guess}}^{\text{max, lb}}(\gamma=1, d)$ for each of $d = 3, 4, 5$, all of them lead to exactly the same Schmidt coefficients of the joint state, which we list in Table~\ref{table:Schmidt}. This fact, together with the irregularity of our numerical curves, reveals the complexity of the geometry of this problem.

\begin{center}
\begin{table}
\begin{tabularx}{0.3\textwidth}{c *{1}{|Y} c *{2}{Y}}
& \multicolumn{2}{c}{Schmidt coefficients} \\
\hline
$d = 3$ & $0.8122$ & $0.5834$ \\ 
$d = 4$ & $0.8314$ & $0.5556$ \\
$d = 5$ & $0.7415$ & $0.6709$ \\
\end{tabularx}
\caption{Schmidt coefficients of the joint state on $RB$ at time $t_2$ for the input states that achieve $p_{\text{guess}}^{\text{max, lb}}(\gamma=1, d)$.}
\label{table:Schmidt}
\end{table}
\end{center}

In future work, it would be very natural to consider games with more than two measurements. It would be interesting to investigate whether a higher dimensional register $R$ could then encode more information about the measurement outcome. Specifically, for the scenario with $d$ mutually unbiased measurements (if they exist) and $d$ possible outcomes, it is reasonable to ask whether one can again achieve perfect guessing (e.g., due to the possibility of creating maximal entanglement between $R$ and $B$).

Another natural extension of our game would be to provide Bob with access to a quantum memory \cite{Renner}. In such a scenario an interesting task would be to investigate the effect of the trade-off between Bob's amount of accessible information about the measurement process and the quality of entanglement between $B$ and Bob's quantum memory. 

Finally, we would like to emphasise that while the described guessing game seems to be only an abstract tool that we use to investigate the connection between quantum preparation uncertainty and lack of information, the game described in Fig.~\ref{fig:game} could in fact be implemented experimentally, e.g., using a Mach-Zehnder interferometer for single photons. For simplicity consider the case $d=2$, although the following discussion can be extended to $d>2$ by considering an interferometer with more than two paths. Suppose that system $R$ is the photon's polarisation, while $B$ is the photon's spatial degree of freedom (the path that it takes in the interferometer). Allowing Bob to have access to the first variable beam splitter of the interferometer allows him to prepare an arbitrary pure qubit state $\rho_B$ inside the interferometer (Bob is allowed to freely choose the reflectance and the relative phase of the beam splitter). The controlled Fourier transform in Fig.~\ref{fig:circuit} is implemented by making the second beam splitter of the interferometer a so-called quantum balanced beam splitter~\cite{TernoQBS}. That is, the photon's polarisation controls whether or not the balanced (50/50) beam splitter appears in the photon's path. Hence, this beam splitter can be effectively in a superposition of being absent and present, if one chooses the polarisation to be in a superposition. This would be a so-called quantum control experiment~\cite{TernoControl}. Let us note that such a quantum beam splitter has been implemented experimentally~\cite{QBSExp1,QBSExp2, QBSExp3}. The winning condition of the game for Bob is correctly guessing which one of the two photon detectors clicked, after being able to measure the polarisation state of the photon behind the quantum beam splitter.

\acknowledgements
We would like to greatly thank Jonas Helsen and Le Phuc Thinh for help with convex optimisation techniques. We are thankful to Pim Veldhuisen and Wouter Uijens for help with semi-definite programme implementation, to Thomas Schiet and Doru Sticlet for useful discussions on numerical sampling techniques and to David Elkouss, Kenneth Goodenough, Corsin Pfister and Mark Steudtner for valuable comments on the manuscript. We are also very grateful to Dmytro Vasylyev for preparing Fig.~\ref{fig:game}. FR and SW are funded by STW, NWO VIDI and an ERC Starting Grant. JK is supported by Ministry of Education, Singapore and the European Research Council (ERC Grant Agreement 337603). PJC is funded by Industry Canada, Sandia National Laboratories, Office of Naval Research, NSERC Discovery Grant, and Ontario Research Fund.

\bibliographystyle{arxiv2}
\bibliography{bibliography}{}

\begin{thebibliography}{10}

\bibitem{bb84}
C.~H. Bennett and G.~Brassard.
\newblock Quantum cryptography: Public key distribution and coin tossing.
\newblock {\em International Conference on Computer System and Signal
  Processing, IEEE}, 1984.

\bibitem{Renner}
M.~Berta, M.~Christandl, R.~Colbeck, J.~M. Renes, and R.~Renner.
\newblock The uncertainty principle in the presence of quantum memory.
\newblock {\em Nat Phys}, 6(9), 09 2010.
\newblock
  \texttt{\href{http://dx.doi.org/10.1038/nphys1734}{DOI:\,10.1038/nphys1734}}.

\bibitem{berta2014entanglement}
M.~Berta, P.~J. Coles, and S.~Wehner.
\newblock Entanglement-assisted guessing of complementary measurement outcomes.
\newblock {\em Phys. Rev. A}, 90:062127, Dec 2014.
\newblock
  \texttt{\href{http://dx.doi.org/10.1103/PhysRevA.90.062127}{DOI:\,10.1103/PhysRevA.90.062127}}.

\bibitem{Berta}
M.~Berta, O.~Fawzi, and S.~Wehner.
\newblock Quantum to classical randomness extractors.
\newblock {\em IEEE Transactions on Information Theory}, 60(2):1168--1192, Feb
  2014.
\newblock
  \texttt{\href{http://dx.doi.org/10.1109/TIT.2013.2291780}{DOI:\,10.1109/TIT.2013.2291780}}.

\bibitem{TernoControl}
L.~C. C{\'e}leri, R.~M. Gomes, R.~Ionicioiu, T.~Jennewein, R.~B. Mann, and
  D.~R. Terno.
\newblock Quantum control in foundational experiments.
\newblock {\em Foundations of Physics}, 44(5):576, 2014.
\newblock
  \texttt{\href{http://dx.doi.org/10.1007/s10701-014-9792-2}{DOI:\,10.1007/s10701-014-9792-2}}.

\bibitem{christandl}
M.~Christandl and A.~Winter.
\newblock Uncertainty, monogamy, and locking of quantum correlations.
\newblock {\em IEEE Transactions on Information Theory}, 51(9):3159--3165, Sept
  2005.
\newblock
  \texttt{\href{http://dx.doi.org/10.1109/TIT.2005.853338}{DOI:\,10.1109/TIT.2005.853338}}.

\bibitem{URsurvey}
P.~J. Coles, M.~Berta, M.~Tomamichel, and S.~Wehner.
\newblock Entropic uncertainty relations and their applications.
\newblock {\em Rev. Mod. Phys.}, 89:015002, Feb 2017.
\newblock
  \texttt{\href{http://dx.doi.org/10.1103/RevModPhys.89.015002}{DOI:\,10.1103/RevModPhys.89.015002}}.

\bibitem{ColesUncertaintyRel}
P.~J. Coles, R.~Colbeck, L.~Yu, and M.~Zwolak.
\newblock Uncertainty relations from simple entropic properties.
\newblock {\em Phys. Rev. Lett.}, 108:210405, May 2012.
\newblock
  \texttt{\href{http://dx.doi.org/10.1103/PhysRevLett.108.210405}{DOI:\,10.1103/PhysRevLett.108.210405}}.

\bibitem{ColesTripartite}
P.~J. Coles, L.~Yu, V.~Gheorghiu, and R.~B. Griffiths.
\newblock Information-theoretic treatment of tripartite systems and quantum
  channels.
\newblock {\em Phys. Rev. A}, 83:062338, Jun 2011.
\newblock
  \texttt{\href{http://dx.doi.org/10.1103/PhysRevA.83.062338}{DOI:\,10.1103/PhysRevA.83.062338}}.

\bibitem{curty2010}
M.~Curty, X.~Ma, H.-K. Lo, and N.~L\"utkenhaus.
\newblock Passive sources for the {B}ennett-{B}rassard 1984
  quantum-key-distribution protocol with practical signals.
\newblock {\em Phys. Rev. A}, 82:052325, Nov 2010.
\newblock
  \texttt{\href{http://dx.doi.org/10.1103/PhysRevA.82.052325}{DOI:\,10.1103/PhysRevA.82.052325}}.

\bibitem{dupuis}
F.~Dupuis, O.~Fawzi, and S.~Wehner.
\newblock Entanglement sampling and applications.
\newblock {\em IEEE Transactions on Information Theory}, 61(2):1093--1112, Feb
  2015.
\newblock
  \texttt{\href{http://dx.doi.org/10.1109/TIT.2014.2371464}{DOI:\,10.1109/TIT.2014.2371464}}.

\bibitem{frank}
R.~L. Frank and E.~H. Lieb.
\newblock Extended quantum conditional entropy and quantum uncertainty
  inequalities.
\newblock {\em Communications in Mathematical Physics}, 323(2):487--495, 2013.
\newblock
  \texttt{\href{http://dx.doi.org/10.1007/s00220-013-1775-1}{DOI:\,10.1007/s00220-013-1775-1}}.

\bibitem{furrer}
F.~Furrer, M.~Berta, M.~Tomamichel, V.~B. Scholz, and M.~Christandl.
\newblock Position-momentum uncertainty relations in the presence of quantum
  memory.
\newblock {\em Journal of Mathematical Physics}, 55(12):122205, 2014.
\newblock
  \texttt{\href{http://dx.doi.org/10.1063/1.4903989}{DOI:\,10.1063/1.4903989}}.

\bibitem{hall}
M.~J.~W. Hall.
\newblock Information exclusion principle for complementary observables.
\newblock {\em Phys. Rev. Lett.}, 74:3307--3311, Apr 1995.
\newblock
  \texttt{\href{http://dx.doi.org/10.1103/PhysRevLett.74.3307}{DOI:\,10.1103/PhysRevLett.74.3307}}.

\bibitem{heisenberg}
W.~Heisenberg.
\newblock \"{U}ber den anschaulichen {I}nhalt der quantentheoretischen
  {K}inematik und {M}echanik.
\newblock {\em Zeitschrift f\"{u}r Physik}, 43(3-4):172, 1927.
\newblock
  \texttt{\href{http://dx.doi.org/10.1007/BF01397280}{DOI:\,10.1007/BF01397280}}.

\bibitem{Helstrom}
C.~W. Helstrom.
\newblock {\em Quantum detection and estimation theory}.
\newblock Academic Press New York, 1976.

\bibitem{TernoQBS}
R.~Ionicioiu and D.~R. Terno.
\newblock Proposal for a quantum delayed-choice experiment.
\newblock {\em Phys. Rev. Lett.}, 107:230406, 2011.
\newblock
  \texttt{\href{http://dx.doi.org/10.1103/PhysRevLett.107.230406}{DOI:\,10.1103/PhysRevLett.107.230406}}.

\bibitem{QBSExp2}
F.~Kaiser, T.~Coudreau, P.~Milman, D.~B. Ostrowsky, and S.~Tanzilli.
\newblock Entanglement-enabled delayed-choice experiment.
\newblock {\em Science}, 338(6107):637, 2012.
\newblock
  \texttt{\href{http://dx.doi.org/10.1126/science.1226755}{DOI:\,10.1126/science.1226755}}.

\bibitem{kennard}
E.~H. {Kennard}.
\newblock {Zur Quantenmechanik einfacher Bewegungstypen}.
\newblock {\em Zeitschrift fur Physik}, 44:326, 1927.
\newblock
  \texttt{\href{http://dx.doi.org/10.1007/BF01391200}{DOI:\,10.1007/BF01391200}}.

\bibitem{Konig}
R.~Konig, R.~Renner, and C.~Schaffner.
\newblock The operational meaning of min- and max-entropy.
\newblock {\em IEEE Transactions on Information Theory}, 55(9):4337--4347, Sept
  2009.
\newblock
  \texttt{\href{http://dx.doi.org/10.1109/TIT.2009.2025545}{DOI:\,10.1109/TIT.2009.2025545}}.

\bibitem{korzekwa}
K.~Korzekwa, M.~Lostaglio, D.~Jennings, and T.~Rudolph.
\newblock Quantum and classical entropic uncertainty relations.
\newblock {\em Phys. Rev. A}, 89:042122, Apr 2014.
\newblock
  \texttt{\href{http://dx.doi.org/10.1103/PhysRevA.89.042122}{DOI:\,10.1103/PhysRevA.89.042122}}.

\bibitem{liu}
S.~Liu, L.-Z. Mu, and H.~Fan.
\newblock Entropic uncertainty relations for multiple measurements.
\newblock {\em Phys. Rev. A}, 91:042133, Apr 2015.
\newblock
  \texttt{\href{http://dx.doi.org/10.1103/PhysRevA.91.042133}{DOI:\,10.1103/PhysRevA.91.042133}}.

\bibitem{luo}
S.~L. Luo.
\newblock Quantum versus classical uncertainty.
\newblock {\em Theoretical and Mathematical Physics}, 143(2):681--688, 2005.
\newblock
  \texttt{\href{http://dx.doi.org/10.1007/s11232-005-0098-6}{DOI:\,10.1007/s11232-005-0098-6}}.

\bibitem{QBSExp3}
A.~Peruzzo, P.~Shadbolt, N.~Brunner, S.~Popescu, and J.~L.
  O{\textquoteright}Brien.
\newblock A quantum delayed-choice experiment.
\newblock {\em Science}, 338(6107):634, 2012.
\newblock
  \texttt{\href{http://dx.doi.org/10.1126/science.1226719}{DOI:\,10.1126/science.1226719}}.

\bibitem{renes}
J.~M. Renes and J.-C. Boileau.
\newblock Conjectured strong complementary information tradeoff.
\newblock {\em Phys. Rev. Lett.}, 103:020402, Jul 2009.
\newblock
  \texttt{\href{http://dx.doi.org/10.1103/PhysRevLett.103.020402}{DOI:\,10.1103/PhysRevLett.103.020402}}.

\bibitem{Robertson}
H.~P. Robertson.
\newblock The uncertainty principle.
\newblock {\em Phys. Rev.}, 34:163, 1929.
\newblock
  \texttt{\href{http://dx.doi.org/10.1103/PhysRev.34.163}{DOI:\,10.1103/PhysRev.34.163}}.

\bibitem{sanchez}
J.~S{\'a}nchez-Ruiz.
\newblock Improved bounds in the entropic uncertainty and certainty relations
  for complementary observables.
\newblock {\em Physics Letters A}, 201(2):125--131, 1995.
\newblock
  \texttt{\href{http://dx.doi.org/10.1016/0375-9601(95)00219-S}{DOI:\,10.1016/0375-9601(95)00219-S}}.

\bibitem{QBSExp1}
J.-S. Tang, Y.-L. Li, C.-F. Li, and G.-C. Guo.
\newblock Revisiting {B}ohr's principle of complementarity with a quantum
  device.
\newblock {\em Phys. Rev. A}, 88:014103, 2013.
\newblock
  \texttt{\href{http://dx.doi.org/10.1103/PhysRevA.88.014103}{DOI:\,10.1103/PhysRevA.88.014103}}.

\bibitem{Watrous}
J.~Watrous.
\newblock Lecture notes on theory of quantum information.
\newblock Available online:
  \url{https://cs.uwaterloo.ca/~watrous/CS766/LectureNotes/08.pdf} [cited
  29.10.2015].

\bibitem{Zurek}
W.~H. Zurek.
\newblock Decoherence, einselection, and the quantum origins of the classical.
\newblock {\em Rev. Mod. Phys.}, 75:715, 2003.
\newblock
  \texttt{\href{http://dx.doi.org/10.1103/RevModPhys.75.715}{DOI:\,10.1103/RevModPhys.75.715}}.

\end{thebibliography}
\onecolumngrid
\appendix

\section{The uncertainty game: definitions and basic derivations}
\label{sec:appendixA}
\subsection{Time evolution of the quantum circuit}
Following the quantum circuit of the uncertainty game in Fig.~\ref{fig:circuit} (in the main article), we derive the explicit form of the density matrices that Bob needs to distinguish in order to win the game. There are different classes of games depending on the parameter $d$ corresponding to the dimension of the Fourier transform or equivalently, the number of possible outcomes of Alice.
Bob prepares a state $\rho_B$ of dimension $d$ and sends it to Alice in register $B$. She holds another register $R$ in a state $\rho_R (\gamma) = \frac{1}{2} (\dya{0}+\dya{1} + \gamma^*\dyad{0}{1}+\gamma \dyad{1}{0})$, where $\gamma \in \mathbb{C}$ and $\abs{\gamma} \leq 1$. This $\gamma$ determines how coherent the register is. Specifically, in the later part of this appendix we show that we can restrict $\gamma$ to be real and $\gamma \in [0,1]$. Hence at the beginning (time $t_1$) the total state of the entire system is:
\begin{equation}
\rho_{RB}(\gamma, \rho_B) = \rho_R (\gamma) \otimes \rho_B =  \frac{1}{2} (\dya{0}_R+\dya{1}_R+ \gamma^*\dyad{0}{1}_R + \gamma \dyad{1}{0}_R) \otimes \rho_B \, .
\end{equation}
The state $ \rho_R (\gamma)$ determines the measurement basis in the following way: $\ket{0}$ corresponds to the measurement in the standard basis and $\ket{1}$ to the measurement in the Fourier basis (which is represented by applying the Fourier transformation to Bob's state and then measuring in the standard basis). Hence, the choice of the measurement basis can be represented by the controlled Fourier transform:
\begin{equation}
U = \dya{0}_R \otimes \mathbb{I}_B + \dya{1}_R \otimes F_B \, .
\end{equation}
We adopt the following convention for the Fourier transform: $F\ket{j} = \frac{1}{\sqrt{d}} \sum_{k=0}^{d-1}\omega^{jk}\ket{k}$ with $\omega = \exp\left(\frac{2 \pi i}{d}\right)$ being the d-th root of unity. After Alice applies the above unitary, the state at time $t_2$ is:
\begin{align}
\rho'_{RB}(\gamma, d, \rho_B)	&= U\rho_{RB}(\gamma, \rho_B) U^\dag = U(\rho_R(\gamma) \otimes \rho_B) U^\dag\\
					&= \frac{1}{2}(\dya{0}_R \otimes \rho_B + \gamma^* \dyad{0}{1}_R \otimes \rho_B F^\dag_B+\gamma \dyad{1}{0}_R F_B \rho_B+ \dya{1}_R \otimes F_B\rho_B F^\dag_B) \, .
\end{align}
Then Alice performs her measurement and the outcome is stored in the output register X. The total state after the measurement at time $t_3$ is:
\begin{equation}
\rho_{RX} (\gamma, d, \rho_B) = \sum_{x} \Tr_B[(\mathbb{I}_R \otimes \dya{x}_B)  \rho'_{RB}(\gamma, d, \rho_B)] \otimes \dya{x}_{X} \, .
\end{equation}
Hence, we see that the subnormalised post-measurement states of the basis register corresponding to Alice's measurement outcome $x$ are:
\begin{equation}
\begin{aligned}
\tilde{\rho}^x_R (\gamma, d, \rho_B) 	&= p_x(d, \rho_B) \rho^x_R(\gamma, d, \rho_B) = \Tr_B [ \left(\mathbb{I}_R \otimes \dya{x}_B \right) \rho'_{RB} ]  \\
						&=
\frac{1}{2}   \left( \begin{array}{cc}
\bra{x}\rho_B\ket{x} & \gamma^* \bra{x}\rho_B F^\dag\ket{x} \\
\gamma \bra{x} F \rho_B\ket{x} & \bra{x}F\rho_BF^\dag \ket{x}
\end{array} \right) \, ,
\label{eq:rho_xmixedcomplexgamma}
\end{aligned}
\end{equation}
where $p_x(d, \rho_B) = \Tr[\tilde{\rho}^x_R (\gamma, d, \rho_B)]$ is the probability that Alice observes outcome $x \in \{0,1, ..., d-1\}$. Note that $p_x$ does not depend on $\gamma$, which only appears in the off-diagonal elements of $\tilde{\rho}^x_R$. These subnormalised $\tilde{\rho}^x_R$'s are the states to which Bob has access and so his ability to predict Alice's measurement outcome $\ket{x}$ is determined by how well he can distinguish the quantum states $\{\rho^x_R\}$ occurring with probabilities $\{p_x\}$.
\subsection{Simplifying lemmas}
In the second part of this appendix we prove two lemmas, which allow us to restrict the coherence parameter $\gamma$ to real and positive numbers and the input state $\rho_B$ to pure states.
\begin{lemma}
In our problem, we can describe all the possible qualitatively different games just with $\gamma \in [0,1]$. That is, all games corresponding to $\gamma \in \mathbb{C}, \abs{\gamma} \leq 1$ are equivalent to some game with $\gamma \in [0,1]$.
\end{lemma}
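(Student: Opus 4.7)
The plan is to show that any $\gamma = |\gamma|e^{i\theta}$ game is equivalent, from Bob's guessing-probability perspective, to the $|\gamma|$ game, by finding a fixed unitary on $R$ that maps the family of subnormalised states $\{\tilde\rho^x_R(\gamma,d,\rho_B)\}_x$ onto $\{\tilde\rho^x_R(|\gamma|,d,\rho_B)\}_x$ for every input $\rho_B$. Since Bob's guessing probability is given by the optimisation in Eq.~\eqref{eq:pguess1}, and this optimisation is invariant under a fixed unitary applied to all states (the optimal POVM is simply conjugated by that unitary), such a mapping is precisely what is needed.

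Concretely, first I would write $\gamma = |\gamma|e^{i\theta}$ and inspect the matrix form of $\tilde\rho^x_R(\gamma,d,\rho_B)$ in Eq.~\eqref{eq:rho_xmixedcomplexgamma}. The phase $\theta$ appears only in the off-diagonal entries, which are proportional to $\gamma^*$ and $\gamma$ respectively. This strongly suggests conjugating by the single-qubit diagonal unitary
\begin{equation}
V_\theta \;=\; \ketbra{0}{0}_R + e^{-i\theta}\ketbra{1}{1}_R \, .
\end{equation}
A direct computation gives $V_\theta\,\tilde\rho^x_R(\gamma,d,\rho_B)\,V_\theta^\dag = \tilde\rho^x_R(|\gamma|,d,\rho_B)$, since the off-diagonals pick up exactly the compensating factors $e^{\pm i\theta}$, while the diagonal entries (which carry the probabilities $p_x$) are untouched. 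Note that $V_\theta$ does not depend on $x$ or on $\rho_B$, which is the crucial point.

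Next I would use this to translate guessing strategies back and forth. If $\{M_x\}$ is any POVM on $R$, then $\{V_\theta^\dag M_x V_\theta\}$ is also a valid POVM, and
\begin{equation}
\Tr\!\bigl[M_x \tilde\rho^x_R(|\gamma|,d,\rho_B)\bigr] \;=\; \Tr\!\bigl[M_x V_\theta\,\tilde\rho^x_R(\gamma,d,\rho_B)\,V_\theta^\dag\bigr] \;=\; \Tr\!\bigl[(V_\theta^\dag M_x V_\theta)\,\tilde\rho^x_R(\gamma,d,\rho_B)\bigr].
\end{equation}
Summing over $x$ and maximising over POVMs in Eq.~\eqref{eq:pguess1}, the bijection $\{M_x\} \leftrightarrow \{V_\theta^\dag M_x V_\theta\}$ shows $p_{\text{guess}}(\gamma,d,\rho_B) = p_{\text{guess}}(|\gamma|,d,\rho_B)$. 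Maximising over $\rho_B$ yields $\pguessmax(\gamma,d) = \pguessmax(|\gamma|,d)$, and in particular every complex $\gamma$ with $|\gamma|\le 1$ is equivalent to some real $\gamma' = |\gamma| \in [0,1]$.

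There is no genuine obstacle here; the only thing to be careful about is checking that $V_\theta$ really is independent of $x$ and $\rho_B$ so that a single POVM relabelling handles the whole game simultaneously. A brief sanity check that $p_x$ is unchanged (as already noted after Eq.~\eqref{eq:rho_xmixedcomplexgamma}, the probabilities do not depend on $\gamma$ at all) confirms that the reduction preserves the entire game, not just Bob's conditional distinguishing problem.
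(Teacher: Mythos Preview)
Your proposal is correct and follows essentially the same approach as the paper: both write $\gamma = |\gamma|e^{i\theta}$, introduce the diagonal phase unitary on $R$ (your $V_\theta$ is the adjoint of the paper's $V(\theta)$), verify that it maps the family $\{\tilde\rho^x_R(\gamma)\}_x$ onto $\{\tilde\rho^x_R(|\gamma|)\}_x$, and conclude that a fixed unitary cannot affect distinguishability. Your version is slightly more explicit in spelling out the POVM bijection, but the argument is the same.
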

\begin{proof}
Let $\gamma = \abs{\gamma} e^{i\theta}$. Then:
\begin{equation}
\begin{aligned}
\tilde{\rho}^x_R (\gamma, d, \rho_B) 	&=
\frac{1}{2}   \left( \begin{array}{cc}
\bra{x}\rho_B\ket{x} & \abs{\gamma} e^{-i\theta} \bra{x}\rho_B F^\dag\ket{x} \\
\abs{\gamma} e^{i\theta} \bra{x} F \rho_B\ket{x} & \bra{x}F\rho_BF^\dag \ket{x}
\end{array} \right) \, ,
\label{eq:rho_xmixedrealgamma}
\end{aligned}
\end{equation}
Let $V(\theta)$ denote the rotation matrix in the $xy$ plane of the Bloch sphere by angle $\theta$. That is:
\begin{equation}
\begin{aligned}
V(\theta) 	&= \left( \begin{array}{cc}
1 & 0 \\
0 & e^{i\theta}
\end{array} \right) \, .
\label{eq:rotmat}
\end{aligned}
\end{equation}
Then it can be easily verified that:
\begin{equation}
\tilde{\rho}^x_R (\gamma, d, \rho_B) = V(\theta) \tilde{\rho}^x_R (\abs{\gamma}, d, \rho_B) V^\dag(\theta) \, ,
\end{equation}
where $\abs{\gamma} \in [0,1]$. Hence all the output states $\tilde{\rho}^x_R (\gamma, d)$ up to a unitary rotation $V(\theta)$ are the same as the corresponding states $\tilde{\rho}^x_R (\abs{\gamma}, d)$. Clearly, rotating all the output states of register $R$ by a fixed angle $\theta$ does not affect their distinguishability. Hence, it is sufficient to consider real and positive $\gamma \in [0,1]$. 
\end{proof}
The probability of successfully discriminating states $\rho^x_R (\gamma, d, \rho_B)$, optimised over all measurements is~\cite{Watrous}:
\begin{equation}
\pguess(\gamma, d, \rho_B) = \max_{\{M_x\}}\sum_{x=0}^{d-1}p_x (d, \rho_B) \Tr[M_x \rho^x_R (\gamma, d, \rho_B)] = \max_{\{M_x\}}\sum_{x=0}^{d-1}\Tr[M_x \tilde{\rho}^x_R (\gamma, d, \rho_B)] \, ,
\label{eq:pguess}
\end{equation}
where $\{M_x\}$ is a POVM. Here, by $\pguess$ we denote the guessing probability optimised over all POVM's but for a specific input state $\rho_B$, while later we will use $\pguessmax$ to denote the guessing probability  $\pguess$ optimised over all inputs states of Bob. Both $\pguess$ and $\pguessmax$ are calculated for a specific game parameterised by $d \geq 2$ and for a specific $\gamma \in [0,1]$. Hence, we have $\displaystyle \pguessmax(\gamma,d) = \max_{\rho_B}\pguess(\gamma,d, \rho_B)$.
\begin{lemma}
To achieve $\pguessmax$ it is sufficient for Bob to consider pure input states. 
\end{lemma}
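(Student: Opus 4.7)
The plan is to exploit the fact that the guessing probability is a convex function of $\rho_B$, so its maximum over the (convex) set of density operators is attained at an extreme point, i.e.\ a pure state.

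First I would observe that the map $\rho_B \mapsto \tilde{\rho}^x_R(\gamma,d,\rho_B)$ is \emph{linear} in $\rho_B$: this is immediate from the explicit formula in Eq.~\eqref{eq:rho_xmixedcomplexgamma}, since every matrix element of $\tilde{\rho}^x_R$ is a linear functional of $\rho_B$ (alternatively, this map is the composition of a tensor product with a fixed state, a unitary conjugation, and a partial trace, all of which are linear). Consequently, for any fixed POVM $\{M_x\}$, the quantity
\begin{equation}
f_{\{M_x\}}(\rho_B) \;=\; \sum_{x=0}^{d-1} \Tr\bigl[M_x\, \tilde{\rho}^x_R(\gamma,d,\rho_B)\bigr]
\end{equation}
is linear (and therefore convex) in $\rho_B$.

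Next I would invoke the fact that a pointwise supremum of convex functions is convex. Since
\begin{equation}
\pguess(\gamma,d,\rho_B) \;=\; \max_{\{M_x\}} f_{\{M_x\}}(\rho_B),
\end{equation}
the map $\rho_B \mapsto \pguess(\gamma,d,\rho_B)$ is convex on the set of density operators. The set of density operators on $\HC_B$ is convex and compact, and its extreme points are precisely the rank-one projectors $\dya{\phi}_B$. A convex function on such a set attains its maximum at an extreme point, so
\begin{equation}
\pguessmax(\gamma,d) \;=\; \max_{\rho_B}\pguess(\gamma,d,\rho_B) \;=\; \max_{\ket{\phi}} \pguess(\gamma,d,\dya{\phi}_B),
\end{equation}
which is the claim.

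There is no real obstacle here beyond careful bookkeeping; the only thing one must be slightly careful about is to note that taking the max over POVMs preserves convexity (not concavity) in $\rho_B$, so the maximising state sits at an extreme point of the input domain rather than in its interior. As a sanity check, one can also give a direct spectral argument: decomposing $\rho_B = \sum_i q_i \dya{\phi_i}$ and using linearity, for any POVM $\{M_x\}$ one has $f_{\{M_x\}}(\rho_B) = \sum_i q_i f_{\{M_x\}}(\dya{\phi_i}) \leq \max_i f_{\{M_x\}}(\dya{\phi_i}) \leq \max_i \pguess(\gamma,d,\dya{\phi_i})$, and maximising the left-hand side over $\{M_x\}$ gives the desired bound.
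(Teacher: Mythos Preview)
Your proof is correct and follows essentially the same approach as the paper: both rely on the linearity of $\rho_B \mapsto \tilde{\rho}^x_R$ and conclude that the maximum over the convex set of density operators is attained at a pure state. The paper carries out the explicit spectral-decomposition bound that you give as your ``sanity check'' (and adds a short remark ruling out any advantage from classical side-memory), whereas you phrase the same idea via the pointwise-supremum-of-convex-functions argument; these are the same reasoning in different packaging.
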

\begin{proof}
Firstly, let us consider the case when not only does Bob hold no quantum memory, but he also does not have any classical memory. Consider then a scenario in which Bob sends Alice a mixed state $\rho_B = \sum_{i} q_i \dya{\phi_i}$, where he is given freedom to choose the probabilities $\{q_i\}$. Then using Eq.~\eqref{eq:rho_xmixed}:
\begin{equation}
\begin{aligned}
\tilde{\rho}^x_R (\gamma, d, \rho_B) 	&= \sum_{i} q_i
\frac{1}{2}   \left( \begin{array}{cc}
\abs{\braket{x|\phi_i}}^2 & \gamma \braket{x|\phi_i} \braket{\phi_i|F^\dag|x} \\
\gamma \braket{\phi_i|x} \braket{x|F |\phi_i} & \abs{\braket{x|F|\phi_i}}^2
\end{array} \right) \\
						&= \sum_{i} q_i \tilde{\rho}^x_{R,i} (\gamma, d, \ket{\phi_i}) \, ,
\label{eq:rho_xmixedproof}
\end{aligned}
\end{equation}
where  $\tilde{\rho}^x_{R} (\gamma, d, \ket{\phi_i})$ denotes a post-measurement register state $\tilde{\rho}^x_R (\gamma, d, \rho_B)$  corresponding to Bob inputting a pure state $\rho_B = \dya{\phi_i}$. In this case the guessing probability from Eq.~\eqref{eq:pguess} becomes:
\begin{equation}
\begin{aligned}
\pguess(\gamma, d, \rho_B)  &= \max_{\{M_x\}}\sum_{x=0}^{d-1}\Tr\left[M_x \sum_{i} q_i \tilde{\rho}^x_{R} (\gamma, d, \ket{\phi_i})\right] \leq  \sum_{i} q_i \max_{\{M_x\}}\sum_{x=0}^{d-1}\Tr[M_x \tilde{\rho}^x_{R} (\gamma, d, \ket{\phi_i})]  \\
				&= \sum_{i} q_i \: \pguess(\gamma, d, \ket{\phi_i}) \leq \max_{i} \pguess(\gamma, d, \ket{\phi_i}) = \pguess(\gamma, d, \ket{\phi_m}) \, ,
\label{eq:pguessoptimisationproblem}
\end{aligned}
\end{equation}
where $\displaystyle \pguess(\gamma, d, \ket{\phi_i}) = \max_{\{M_x\}}\sum_{x=0}^{d-1}\Tr[M_x \tilde{\rho}^x_{R} (\gamma, d, \ket{\phi_i})]$ and by index $m$ we denote the largest of all $\pguess(\gamma, d, \ket{\phi_i})$ over all $i$'s. Hence it is optimal for Bob to prepare a state $\rho_B = \sum_{i} q_i \dya{\phi_i} = \dya{\phi_m}$ (so that $q_i = \delta_{i,m}$), such that $ \ket{\phi_m} \in \{\ket{\phi_i}\}$  and $\displaystyle \pguess(\gamma, d, \ket{\phi_m}) = \max_{i} \pguess(\gamma, d, \ket{\phi_i})$.
 
Now, if we allow Bob to have classical memory, he could then prepare a mixed state $\rho_B$ which is classically correlated to this memory. Then for each of the states $\rho^i_B$, corresponding to the state of the classical memory $\ket{i}_{M}$, we need to solve a separate optimisation problem given by Eq.~\eqref{eq:pguess}. Hence, if Bob prepares a state:
\begin{equation}
\rho_{BM} = \sum_{i} s_i \rho^i_B \otimes \dya{i}_{M}
\end{equation}
according to the probability distribution $\{s_i\}$, then the guessing probability will be a weighted average of the individual guessing probabilities corresponding to each of the states $\rho^i_B$, namely:
\begin{equation}
\pguess(\gamma, d, \rho_B) = \sum_i s_i \pguess(\gamma, d, \rho^i_B) \leq \pguess(\gamma, d, \rho^k_B) \, ,
\end{equation}
where $\rho^k_B$ is the input state that gives the highest guessing probability out of all the states $\{\rho^i_B\}$. Hence, classical memory does not allow us to achieve guessing probability higher than individual $\rho^k_B$, for which (as we have just seen) the guessing probability is upper bounded by its value corresponding to the optimal pure state $\ket{\phi_m}$ in the decomposition $\rho^k_B = \sum_{i} q_i \dya{\phi_i}$.
\end{proof}
Hence we will restrict our attention to scenarios in which Bob prepares a pure state $\ket{\phi}_B$. In this case the post-measurement states of the basis register are:
\begin{align}
\tilde{\rho}^x_R (\gamma, d, \ket{\phi}_B) 	&=
\frac{1}{2}   \left( \begin{array}{cc}
\abs{\braket{x|\phi}}^2 & \gamma \braket{x|\phi} \braket{\phi|F^\dag|x} \\
\gamma  \braket{x|F |\phi}\braket{\phi|x} & \abs{\braket{x|F|\phi}}^2
\end{array} \right) \, .
\label{eq:rho_x}
\end{align}


\section{Guessing probability for two-dimensional game ($d=2$)}
\label{sec:appendixB}

In this appendix we prove Theorem~\ref{pguessmaxd2}. That is, we derive the analytical formula for the maximum guessing probability as a function of $\gamma \in [0,1]$, for a game with two-dimensional Fourier transform (Hadamard transform) in our circuit and two possible outcomes. In this game the state $\rho_B$ that Bob prepares is a qubit. The two possible outcomes for Alice are: 0 and 1. We firstly restate this theorem below.

\begin{reptheorem}{pguessmaxd2}
The maximum guessing probability for a two-dimensional game ($d=2$), optimised over all input states $\rho_B$ is given by:
\begin{equation}
\pguessmax(\gamma, d=2) = \frac{1}{2} \left(1 + \frac{\sqrt{2+2\gamma^2}}{2}\right) \, .
\end{equation}
In particular, for $\gamma = 1$ one achieves perfect guessing, that is $\pguessmax(\gamma=1, d=2) = 1$.
\end{reptheorem}
\begin{proof}
The guessing probability is determined by how well Bob can distinguish states $\tilde{\rho}^0_R$ and $\tilde{\rho}^1_R$ defined in Eq.~\eqref{eq:rho_x} (for convenience we will omit writing out explicitly the dependence on $\gamma$ and $d$). The problem of distinguishing two states has been solved by Helstrom~\cite{Helstrom} and the guessing probability is:
\begin{equation}
p_{\text{guess}}= \frac{1}{2} (1 + \norm{G}_1) \, ,
\label{eq:Helstrom}
\end{equation}
where $G = \tilde{\rho}^0_R - \tilde{\rho}^1_R = p_0 \rho^0_R - p_1 \rho^1_R$ and $\norm{\cdot}_1$ denotes the trace-norm of the matrix. Firstly we note that for $d=2$, $F = F^\dag = H$. Secondly, since $\rho_B$ is a qubit, it is convenient to use the Bloch sphere representation:
\begin{equation}
\rho_B = \frac{1}{2}\left(\mathbb{I}+\sum_i c_i \sigma_i\right) \, ,
\end{equation}
with $c_x^2 + c_y^2 + c_z^2 \leq 1$. Although we have already shown in Appendix~\ref{sec:appendixA} that the optimal guessing probability $\pguessmax$ will be achieved for a pure input state $\rho_B$, here we are interested in all the qubit states that achieve this maximum guessing probability (under the assumption of Bob having no classical memory; if Bob had access to some classical memory, then any mixture of such optimal states correlated with this memory would also be an optimal state). Hence, in this appendix we again assume $\rho_B$ to be an arbitrary (possibly mixed) qubit state. Plugging the Bloch sphere representation of $\rho_B$ into Eq.~\eqref{eq:rho_xmixed}, we can first calculate $\tilde{\rho}^0_R$ and $\tilde{\rho}^1_R$ and then $G$:
\begin{align}
G = \frac{1}{2}  \left( \begin{array}{cc}
c_z & \frac{\gamma(1-i \cdot c_y)}{\sqrt{2}}  \\
\frac{\gamma(1+i \cdot c_y)}{\sqrt{2}} & c_x \end{array} \right) \, .
\end{align}
The eigenvalues of $G$ are:
\begin{equation}
\lambda = \frac{(c_x + c_z) \pm \sqrt{(c_x-c_z)^2 + \gamma^2(1+c_y^2)}}{4} \, .
\end{equation}
Now, let us consider two cases:
\begin{enumerate} [(a)]
\item $\lambda_1 \cdot \lambda_2 \geq 0$.

Then $\norm{G}_1^{\text{a}} = \abs{\lambda_1} + \abs{\lambda_2} = \abs{c_x+c_z}/2$ (the superscript ``a'' labels the case $\lambda_1 \cdot \lambda_2 \geq 0$). We are interested in the maximum possible value of $\norm{G}_1^{\text{a}}$ for a given $\gamma$. Hence we want to maximise the expression $\abs{c_x+c_z}$ subject to the constraint $c_x^2 + c_y^2+  c_z^2 \leq 1$. Clearly, this gives us $\abs{c_x+c_z} \leq \sqrt{2}$. and so $\norm{G}_1^{\text{a, max}} \leq \frac{\sqrt{2}}{2}$. In particular, this bound is tight for $c_y=0$ and $c_x = c_z = \pm \frac{1}{\sqrt{2}}$ (those states clearly satisfy the condition $\lambda_1 \cdot \lambda_2 \geq 0$). Hence, $\norm{G}_1^{\text{a, max}}= \frac{\sqrt{2}}{2}$.

\item $\lambda_1 \cdot \lambda_2 < 0$.

Then:
\begin{align}
\lambda_1 &= \frac{(c_x+c_z) + \sqrt{(c_x-c_z)^2 + 2\gamma^2(1+c_y^2)}}{4} > 0 \\
\lambda_2 &= \frac{(c_x+c_z) - \sqrt{(c_x-c_z)^2 + 2\gamma^2(1+c_y^2)}}{4} < 0 \, .
\end{align}
Hence in this case:
\begin{equation}
\norm{G}_1^{\text{b}} = \lambda_1 - \lambda_2 = \frac{\sqrt{(c_x-c_z)^2 + 2\gamma^2(1+c_y^2)}}{2} \, .
\end{equation}
Now we need to optimise this expression subject to the constraint $c_x^2 + c_y^2 + c_z ^2  \leq 1$. Let us use a substitution $a = \frac{c_x-c_z}{\sqrt{2}}$ $b = \frac{c_x+c_z}{\sqrt{2}}$. Then the constraint becomes: $a^2 + c_y^2 + b^2 \leq 1$ and the norm of $G$ is:
\begin{equation}
\norm{G}_1^{\text{b}} = \frac{\sqrt{2a^2 + 2\gamma^2(1+c_y^2)}}{2} \, .
\end{equation}
Clearly, since the term $c_y^2$ is scaled by the positive factor $2\gamma^2 \leq 2$, while $a^2$ is scaled by a factor of exactly 2, optimising this expression corresponds to setting $a^2$ to its maximum possible value which is 1 (so that $c_x=-c_z = \pm \frac{1}{\sqrt{2}}$). Then $c_y=b=0$ (one can easily verify that those values satisfy the condition of (b) $\lambda_1 \cdot \lambda_2 < 0$, for all $\gamma \in [0,1]$). This gives:
\begin{equation}
\norm{G}_1^{\text{b, max}} = \frac{\sqrt{2+2\gamma^2}}{2} \, ,
\end{equation}
\end{enumerate}
Clearly $\norm{G}_1^{\text{b, max}} \geq \norm{G}_1^{\text{a, max}}$ for all $\gamma \in [0,1]$ (the equality relation holds only for $\gamma=0$). Hence:
\begin{equation}
\norm{G}_1^{\text{max}} = \frac{\sqrt{2+2\gamma^2}}{2} \, ,
\end{equation}
Using $\norm{G}_1^{\text{max}}$, for every $\gamma$ we can now calculate the maximum value of the guessing probability:
\begin{equation}
\pguessmax(\gamma, d=2) = \frac{1}{2} (1 + \norm{G}_1^{\text{max}}) = \frac{1}{2} \left(1 + \frac{\sqrt{2+2\gamma^2}}{2}\right) \, .
\label{eq:pguessmaxqubit}
\end{equation}
We see also that for a fully coherent register with $\gamma = 1$, we obtain $\pguessmax  = 1$.
\end{proof}

In order to find the optimal states we need to consider 3 separate cases depending on the value of $\gamma$.

\begin{itemize}

\item $\gamma=0$. In this case $\norm{G}_1^{\text{max}}= \frac{\sqrt{2}}{2}$. This value occurs for two classes of states. One of them satisfies $a^2=1$ and $b=c_y=0$ which gives two solutions: $c_x=-c_z = \pm \frac{1}{\sqrt{2}}$. Hence we obtain two states: $(c_x, c_y, c_z) =\left(\frac{1}{\sqrt{2}}, 0,  -\frac{1}{\sqrt{2}}\right)$ and $(c_x, c_y, c_z) = \left(-\frac{1}{\sqrt{2}}, 0,  \frac{1}{\sqrt{2}}\right)$. The other class can be seen by noticing that $\norm{G}_1^{\text{max}}= \frac{\sqrt{2}}{2}=\norm{G}_1^{\text{a, max}}$ and so it can also be obtained from the case (a) for two states that achieve this value: $(c_x, c_y, c_z) =\left(\frac{1}{\sqrt{2}}, 0,  \frac{1}{\sqrt{2}}\right)$ and $(c_x, c_y, c_z) = \left(-\frac{1}{\sqrt{2}}, 0,  -\frac{1}{\sqrt{2}}\right)$.

\item $\gamma \in (0,1)$. Here we only have the class $a^2=1$ and $b=c_y=0$, that is the states: $(c_x, c_y, c_z) =\left(\frac{1}{\sqrt{2}}, 0,  -\frac{1}{\sqrt{2}}\right)$ and $(c_x, c_y, c_z) = \left(-\frac{1}{\sqrt{2}}, 0,  \frac{1}{\sqrt{2}}\right)$.

\item $\gamma=1$. Now $\norm{G}_1^{\text{b}} = \frac{\sqrt{2a^2 + 2(1+c_y^2)}}{2}$, and so this expression subject to the Bloch sphere normalisation is maximised by the pure states satisfying $a^2 + c_y^2 = 1$ and $b=0$. These are all pure states with $c_z = -c_x$ and $c_y = \pm \sqrt{1-2c_x^2}$. We can use angular parametrisation of those coefficients, in which case we can write this entire family of states as $(c_x, c_y, c_z) =\left(\sin(\theta), \pm \sqrt{\cos(2\theta)},  -\sin(\theta)\right)$ for all $\theta \in [-\frac{\pi}{4}, \frac{\pi}{4}]$. Geometrically, these states correspond to all pure states on the Bloch sphere that lie in the plane perpendicular to the Hadamard rotation axis and Hadamard transformation rotates them by $\pi$ rad to their orthogonal complement.
\end{itemize}
From Eq.~\eqref{eq:pguessmaxqubit} we see that the lowest value of $\pguessmax$ occurs for $\gamma = 0$ and it is $\pguessmax  = \frac{1}{2} \left(1 + \frac{1}{\sqrt{2}}\right)$. As the basis register state is becoming more pure by letting $\gamma$ grow, the $\pguessmax$ grows, until $\pguessmax  = 1$ for  $\gamma = 1$. We can also rephrase the guessing probability in terms of the purity of the basis register:
\begin{equation}
\Tr[\rho_R^2] = \frac{1}{4} \Tr\left[
\begin{pmatrix}
1& \gamma \\
\gamma & 1 
\end{pmatrix} 
\begin{pmatrix}
1& \gamma \\
\gamma & 1 
\end{pmatrix}
\right]= \frac{1}{4} \Tr\left[
\begin{pmatrix}
1+ \gamma^2& 2\gamma \\
2\gamma & 1+ \gamma^2 
\end{pmatrix} 
\right] =\frac{1+ \gamma^2}{2} \, .
\end{equation}
Hence:
\begin{equation}
\pguessmax(\gamma, d=2) = \frac{1}{2} \left(1 + \sqrt{\Tr[\rho_R^2]}\right) \, .
\end{equation}


\section{Guessing probability for the d-dimensional game}
\label{sec:appendixC}
We have already seen that in two dimensions utilising entanglement allows for guessing with probability equal to 1. In higher dimensions however, we show that this is not possible. This fact is expressed in Theorem~\ref{perfectguessinghighdim} in the main text. We restate and prove this theorem below.
\begin{reptheorem}{perfectguessinghighdim}
For $d$-dimensional games with any $d>2$ it is not possible to achieve perfect guessing, i.e.,
\begin{equation}
\pguessmax(\gamma, d>2) < 1 \, , \quad\quad  \forall \hspace{2pt}\gamma\,.
\end{equation}
\end{reptheorem}
\begin{proof}
We construct a proof by contradiction. Let us assume that there exists $d>2$ and $\gamma \in [0,1],$ such that $\pguessmax(\gamma, d) =1$. Since the states
$\tilde{\rho}^x_R (\gamma, d, \ket{\phi})$ are two-dimensional, it is only possible to perfectly distinguish at most 2 such states (if they are orthogonal). Hence, that means that to achieve $\pguessmax(\gamma, d) =1$ it is required that at least $d-2$ output states $\rho^x_R$ occur with probability zero. Hence, $\tilde{\rho}^x_R \neq 0$ for at most two values of $x$. Let us denote those two values of $x \in \{0,1,...,d-1\}$ for which it is possible that $\tilde{\rho}^x_R \neq 0$ by $x_0$ and $x_1$. We assume that those values are distinct so that $x_0 \neq x_1$. Specifically, let us assume that $\tilde{\rho}^{x_0}_R \neq 0$, while $\tilde{\rho}^{x_1}_R$ may or may not be equal to zero. Then let us define $\mathcal{P} = \{0,1,...,d-1\} \setminus \{x_0, x_1\}$. Therefore we require that $\tilde{\rho}^x_R = 0$ for all $x \in \mathcal{P}$. Thus we obtain the following two requirements:
\begin{enumerate}[1)]
\item $\braket{x|\phi} = 0$ for all $x \in \mathcal{P}$, \\
\item $\braket{x|F|\phi}=0$ for all $x \in \mathcal{P}$.\\
\end{enumerate}
The requirement 1) implies that the physical input state of Bob must be of the form:
\begin{equation}
\ket{\phi} = \alpha_0 \ket{x_0} + \alpha_1 \ket{x_1} \, ,
\label{eq:vanishmatrix}
\end{equation}
with 
\begin{equation}
\abs{\alpha_0}^2 + \abs{\alpha_1}^2 = 1 \, .
\label{eq:normalisation}
\end{equation}
In this framework, the scenario in which only $\tilde{\rho}^{x_0}_R \neq 0$ would require $\alpha_1 = 0$.
Now, note that:
\begin{equation}
F^\dag \ket{j} = \frac{1}{\sqrt{d}} \sum_{k=0}^{d-1} \omega^{-jk} \ket{k} \, ,
\end{equation}
where $\omega = \exp\left(\frac{2 \pi i}{d}\right)$ and so:
\begin{equation}
\bra{\phi}F^\dag \ket{x} =  \frac{1}{\sqrt{d}} (\alpha_0^* \omega^{-xx_0} + \alpha_1^* \omega^{-xx_1}) \, .
\end{equation}
Then 2) implies that:
\begin{equation}
\alpha_0^* + \alpha_1^* \omega^{x(x_0-x_1)} = 0 \, , \quad\quad  \forall x \in \mathcal{P} \, .
\label{eq:alpharel}
\end{equation}
Eq.~\eqref{eq:alpharel} together with Eq.~\eqref{eq:normalisation} require that $\alpha_0$ and $\alpha_1$ are of the form:
\begin{align}
\alpha_0 = \frac{1}{\sqrt{2}} e^{i\theta_0} \, ,
\label{eq:probamplfor2jsalpha0} \\
\alpha_1 = \frac{1}{\sqrt{2}} e^{i\theta_1} \, .
\label{eq:probamplfor2jsalpha1}
\end{align}
The above requirement shows that $\alpha_1$ cannot be zero, which in turn means that the scenario in which only $\tilde{\rho}^{x_0}_R \neq 0$ is not possible. Plugging the above forms of $\alpha$'s into Eq.~\eqref{eq:alpharel} and using the fact that $\omega$ is the d-th root of unity, we obtain the following requirement:
\begin{equation}
\theta_0 \equiv \theta_1 + \pi + 2\pi \left[\frac{x}{d} (x_1 - x_0)\right] (\bmod\; 2\pi) \, , \quad\quad  \forall x \in \mathcal{P} \, .
\label{eq:angleconstraint}
\end{equation}
Note that for $d=3$, this expression can be easily satisfied since in this case $\abs{\mathcal{P}}=1$, so e.g. $\theta_0 = \theta_1 + \pi + 2\pi \left[\frac{x_{\mathcal{P}}}{d} (x_1 - x_0)\right]$, where $x_{\mathcal{P}} \in \mathcal{P}$ satisfies Eq.~\eqref{eq:angleconstraint}. Hence the case $d=3$ needs to be analysed separately. For $d>3$ this equation can be satisfied if and only if:
\begin{equation}
\frac{x_1 - x_0}{d} \in \mathbb{Z} \, ,
\end{equation}
where $\mathbb{Z}$ denotes the set of integers. However, $x_0, x_1 \in \{0, d-1\}$ and $x_0 \neq x_1$. Therefore this equation cannot be satisfied. Hence, for $d>3$, it is not possible to have $\pguess(\gamma,d) = 1$. Now, let us consider the case $d=3$. Eq.~\eqref{eq:vanishmatrix} and Eq.~\eqref{eq:probamplfor2jsalpha0}-\eqref{eq:angleconstraint} imply that
\begin{equation}
\ket{\phi} = \frac{1}{\sqrt{2}}\left(\ket{x_1} - \omega^{x_{\mathcal{P}}(x_1-x_0)}\ket{x_0}\right) \, ,
\end{equation}
where we fix the global phase by setting $\theta_1 = 0$. Since $x_{\mathcal{P}}, x_0, x_1$ must be all different, there are 6 possible states  $\ket{\phi}$ corresponding to the above expression. Let $\ket{\psi_{kl}} = \frac{1}{\sqrt{2}}\left(\ket{l} - \omega^{x_{\mathcal{P}}(l-k)}\ket{k}\right)$. Then note that for every value of $x_{\mathcal{P}}$, the state $\ket{\phi} =\ket{\psi_{kl}}$ with $x_0 = k, x_1 = l$ and the state $\ket{\phi} = \ket{\psi_{lk}}$ with $x_0 = l, x_1 = k$ up to the global phase correspond to exactly the same state, since:
\begin{equation}
\ket{\psi_{kl}} = \frac{1}{\sqrt{2}}\left(\ket{l} - \omega^{x_{\mathcal{P}}(l-k)}\ket{k}\right) = -\omega^{x_{\mathcal{P}}(l-k)}\frac{1}{\sqrt{2}}\left(- \omega^{x_{\mathcal{P}}(k-l)}\ket{l} + \ket{k}\right) = -\omega^{x_{\mathcal{P}}(l-k)} \ket{\psi_{lk}} \, .
\end{equation}
Hence, we need only to consider 3 separate cases:
\begin{itemize}
\item For $x_{\mathcal{P}}=0, x_0=1, x_1= 2$, that is when $\tilde{\rho}^0_R = 0$, we have:
\begin{equation}
\ket{\phi} = \frac{1}{\sqrt{2}}\left(\ket{2} - \ket{1}\right) \, .
\end{equation}
Then:
\begin{equation}
F\ket{\phi} =i \frac{1}{\sqrt{2}}\left(\ket{2} - \ket{1}\right) = i \ket{\phi} \, .
\end{equation}
This means that if we define a matrix
\begin{equation}
\rho_c(\gamma) = \frac{1}{2} \left( \begin{array}{cc}
1& -i \gamma \\
i \gamma & 1
 \end{array} \right) \, ,
\end{equation}
then $\tilde{\rho}^0_R = 0, \tilde{\rho}^1_R = \abs{\braket{1|\phi}}^2 \rho_c(\gamma), \tilde{\rho}^2_R = \abs{\braket{2|\phi}}^2 \rho_c(\gamma)$. Hence, $\tilde{\rho}^1_R = \tilde{\rho}^2_R = \frac{1}{2} \rho_c(\gamma)$ and so we see that $\tilde{\rho}^1_R$ and $\tilde{\rho}^2_R$ correspond to the same state $\rho_c(\gamma)$ occurring with probability 0.5. This means that guessing probability in this case is 0.5 for all $\gamma \in [0,1]$.

\item For $x_{\mathcal{P}}=1, x_0=2, x_1=0$ with $\tilde{\rho}^1_R = 0$ the input state is:
\begin{equation}
\ket{\phi} = \frac{1}{\sqrt{2}}\left(\ket{0} - \omega^{-2}\ket{2}\right) = \frac{1}{\sqrt{2}}\left(\ket{0} - \omega\ket{2}\right) \, .
\end{equation}
Then:
\begin{equation}
F\ket{\phi} = \frac{1}{\sqrt{6}} (1-\omega) \left(\ket{0} - \omega^2 \ket{2}\right) \, .
\end{equation}
Hence, 
\begin{align}
\tilde{\rho}^0_R	&= \frac{1}{4} \left( \begin{array}{cc}
1& \gamma \frac{1}{\sqrt{3}}(1-\omega^*)  \\
\gamma \frac{1}{\sqrt{3}}(1-\omega) & 1
 \end{array} \right) \, , \\
\tilde{\rho}^1_R	&= 0 \, , \\
\tilde{\rho}^2_R	&= \frac{1}{4} \left( \begin{array}{cc}
1 & \gamma \frac{1}{\sqrt{3}}(1-\omega^*)\omega^*  \\
\gamma \frac{1}{\sqrt{3}}(1-\omega)\omega & 1
 \end{array} \right) \, .
 \end{align}
One can now show that $\Tr[\tilde{\rho}^0_R\tilde{\rho}^2_R] \neq 0$ for all $\gamma \in [0,1]$. Hence those states are not orthogonal and perfect guessing is not possible.
 
\item For $x_{\mathcal{P}}=2, x_0=0, x_1 = 1$, with $\tilde{\rho}^1_R = 0$ the input state is:
\begin{equation}
\ket{\phi} = \frac{1}{\sqrt{2}}\left(\ket{1} - \omega^{2}\ket{0}\right) \, .
\end{equation}
Then:
\begin{equation}
F\ket{\phi} = \frac{1}{\sqrt{6}} \left((1-\omega^2) \ket{0} + \sqrt{3}i \ket{1}\right) \, .
\end{equation}
Hence, 
\begin{align}
\tilde{\rho}^0_R	&= \frac{1}{4} \left( \begin{array}{cc}
1 & \gamma \frac{1}{\sqrt{3}}(1-\omega^*)  \\
\gamma \frac{1}{\sqrt{3}}(1-\omega) & 1
 \end{array} \right) \, , \\
\tilde{\rho}^1_R	&= \frac{1}{2} \rho_c(\gamma) \, , \\
\tilde{\rho}^2_R	&= 0 \, .
 \end{align}
Again $\Tr[\tilde{\rho}^0_R\tilde{\rho}^1_R] \neq 0$ for all $\gamma \in [0,1]$. Hence also in this case perfect guessing is not possible.
\end{itemize}
We have shown that perfect guessing in $d=3$ case is not possible either. Therefore we conclude that for all $d>2$ and for all $\gamma \in [0,1], \: \pguessmax(\gamma, d) <1$.
\end{proof}
The case $\gamma=0$ is a special case and can be solved analytically for all $d \geq 2$. 
\begin{proposition}  \thlabel{classical}
For $\gamma=0$ the maximal guessing probability is:
\begin{equation}
\pguessmax(\gamma = 0, d)  = \frac{1}{2}\left(1 + \frac{1}{\sqrt{d}}\right) \, ,
\end{equation}
and under assumption of Bob having no classical memory, it is achieved if and only if Bob's input state $\rho_B$ belongs to the following family of pure states:
\begin{equation}
\ket{\phi_{jl}} = c \left(\ket{j} + \omega^{jl}F^\dag \ket{l}\right) \, ,
\end{equation}
where $\omega = \exp\left(\frac{2 \pi i}{d}\right), j,l \in \{0,1,...,d-1\}$ and $c = \sqrt{\frac{\sqrt{d}}{2 \sqrt{d}+ 2}}$.
\end{proposition}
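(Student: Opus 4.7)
The plan is to exploit the strong simplification that occurs at $\gamma=0$. Setting $\gamma=0$ in Eq.~\eqref{eq:rho_x} kills the off-diagonal entries of every $\tilde{\rho}^x_R$, so all $d$ subnormalised register states are simultaneously diagonal in the $\{\ket{0},\ket{1}\}$ basis of $R$ and hence commute. A standard dephasing argument then shows that the optimal POVM in Eq.~\eqref{eq:pguess} may be taken diagonal in the same basis, $M_x = a_x\dya{0} + b_x\dya{1}$ with $\{a_x\}$ and $\{b_x\}$ probability distributions on the outcomes; the guessing problem decouples into two independent classical ones and yields
\begin{equation}
\pguess(0,d,\ket{\phi}) = \tfrac{1}{2}\!\left(\max_x |\braket{x|\phi}|^2 + \max_x |\braket{x|F|\phi}|^2\right).
\end{equation}

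Next I would upper-bound this over pure inputs $\ket{\phi}$. For any indices $j,l \in \{0,\dots,d-1\}$,
\begin{equation}
|\braket{j|\phi}|^2 + |\braket{l|F|\phi}|^2 = \braket{\phi|T_{jl}|\phi} \leq \lambda_{\max}(T_{jl}), \qquad T_{jl} := \dya{j} + F^\dag\dya{l}F.
\end{equation}
The operator $T_{jl}$ has rank two on $\mathrm{span}\{\ket{j}, F^\dag\ket{l}\}$, and from $\Tr T_{jl} = 2$ together with $\Tr T_{jl}^2 = 2 + 2|\braket{j|F^\dag|l}|^2 = 2 + 2/d$ (mutual unbiasedness of the two bases) one reads off its nonzero eigenvalues as $1 \pm 1/\sqrt d$. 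Combined with the previous step this already gives $\pguessmax(0,d) \leq \tfrac{1}{2}(1+1/\sqrt d)$.

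For achievability and to identify the optimal family, I would compute the top eigenvector of $T_{jl}$ via the ansatz $\ket{\phi} = c(\ket{j} + e^{i\chi} F^\dag\ket{l})$. Substituting $\braket{j|F^\dag|l} = \omega^{-jl}/\sqrt d$, the Rayleigh quotient of $T_{jl}$ depends on $\chi$ only through $\cos(\chi - 2\pi jl/d)$ and is monotonically increasing in this variable, so it is maximised at $e^{i\chi} = \omega^{jl}$. This selects precisely the state $\ket{\phi_{jl}}$ of the proposition, with normalisation $c^2 = \sqrt d/(2\sqrt d + 2)$. A short direct evaluation then verifies that $|\braket{x|\phi_{jl}}|^2 = \tfrac{1}{2}(1+1/\sqrt d)$ at $x=j$ and equals the strictly smaller value $c^2/d$ for $x\neq j$, and analogously $|\braket{x|F|\phi_{jl}}|^2$ is uniquely maximised at $x=l$; this argmax-consistency is what allows the chain of inequalities above to be saturated.

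For the ``only if'' part of the characterisation, let $\ket{\phi}$ achieve the maximum (pure, by Appendix~\ref{sec:appendixA}) and pick any argmaxes $j,l$ of $|\braket{x|\phi}|^2$ and $|\braket{x|F|\phi}|^2$. Then $\braket{\phi|T_{jl}|\phi} = 1+1/\sqrt d = \lambda_{\max}(T_{jl})$, and since this eigenvalue is strictly separated from the other nonzero eigenvalue $1-1/\sqrt d$, the top eigenspace is one-dimensional, forcing $\ket{\phi} = \ket{\phi_{jl}}$ up to a global phase. The main obstacles are mild: keeping track of the $\omega^{jl}$ phase through the optimisation so that the stated family is recovered exactly, and performing the argmax-consistency check for $\ket{\phi_{jl}}$ that makes the upper bound attainable. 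The POVM reduction and the spectral analysis of $T_{jl}$ are routine.
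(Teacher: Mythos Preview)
Your proof is correct and follows essentially the same route as the paper: both reduce the $\gamma=0$ guessing problem to computing $\tfrac{1}{2}\norm{T_{jl}}_\infty$ for the rank-two operator $T_{jl}=\dya{j}+F^\dag\dya{l}F$ and identify the optimal $\ket{\phi_{jl}}$ as its top eigenvector. Your treatment is slightly more explicit than the paper's on the dephasing reduction to diagonal POVMs, the argmax-consistency check, and the ``only if'' direction via simplicity of the top eigenvalue, but these are refinements of the same argument rather than a different approach.
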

\begin{proof}
If one measures in the standard basis, the guessing probability for a fixed input state $\rho_B$ is:
\begin{equation}
p_{\text{guess}}^{\text{standard}}(d, \rho_B) = \max_{l} \Tr[\dya{l} \rho_B] \, .
\end{equation}
If one measures in the Fourier basis:
\begin{equation}
p_{\text{guess}}^{\text{Fourier}}(d, \rho_B) = \max_l \Tr[\dya{l} F\rho_B F^\dag] =\max_l \Tr[F^\dag\dya{l}F\rho_B] \, .
\end{equation}
Since each measurement occurs with probability 50\% and in the classical game the register $R$ only tells Bob which measurement basis was used, the guessing probability optimised over all input states of Bob is:
\begin{align}
\pguessmax(\gamma=0,d)	&=  \frac{1}{2} \max_{\rho_B} (p_{\text{guess}}^{\text{standard}}(d, \rho_B) + p_{\text{guess}}^{\text{Fourier}}(d, \rho_B)) =  \frac{1}{2} \max_{\rho_B} \max_{j,l} \Tr[(\dya{j} + F^\dag\dya{l} F) \rho_B] \\
						&= \frac{1}{2}\max_{j,l} \norm{\dya{j} + F^\dag\dya{l} F}_{\infty} \, ,
\end{align}
where $\norm{\cdot}_{\infty}$ denotes the infinity norm. The matrix whose infinity norm we need to find is a rank-2 matrix. Let $\pguess = \frac{1}{2}\norm{M}_{\infty}$ and $M = \dya{\alpha} + \dya{\beta}$ be a rank-2 matrix. The largest eigenvalue of such a matrix is $\norm{M}_{\infty} = \lambda_{\text{max}} = 1 + \abs{\braket{\alpha|\beta}}$. In our case: $\ket{\alpha} = \ket{j}$ and $\ket{\beta} = F^\dag\ket{l}$. This means that $\norm{M}_{\infty} = 1 + \frac{1}{\sqrt{d}}$ and so:
\begin{equation}
\pguessmax(\gamma=0,d) = \frac{1}{2}\left(1 + \frac{1}{\sqrt{d}}\right) \, .
\label{eq:pguessmaxgamma0}
\end{equation}
The eigenstate corresponding to this eigenvalue  $\lambda_{\text{max}}$ is:
\begin{equation}
\ket{\phi_{jl}} = c \left(\ket{j} + \omega^{jl}F^\dag \ket{l}\right) \, .
\end{equation}
Hence only the states of this form will give us the maximum guessing probability.
\end{proof}
We will now show that for a subclass of the states of this form Bob will be guessing always either $j$ or $l$, for all $\gamma \in [0,1]$ and all $d \geq 2$, since those 2 outcomes have much higher probabilities of occurrence $p_j(d, \ket{\phi_{jl}})$ and $p_l(d, \ket{\phi_{jl}})$ than all other outcomes (i.e. we will show that for input state $\ket{\phi_{jl}} = c \left(\ket{j} + \omega^{jl}F^\dag \ket{l}\right)$ such that $j \neq l$ the optimal strategy aims at distinguishing only the two states $\tilde{\rho}^j_R(\gamma,d, \ket{\phi_{jl}})$ and $\tilde{\rho}^l_R(\gamma,d, \ket{\phi_{jl}})$).
\begin{lemma} \thlabel{lookingon2} 
For all $d \geq 2$, for all $\gamma \in [0,1]$ and for all states $\ket{\phi_{jl}} = c \left(\ket{j} + \omega^{jl}F^\dag \ket{l}\right)$, such that $j,l \in \{0,1,...,d-1\}$ and $j \neq l$, the optimal guessing probability can be achieved by Bob if his measurement on the state of register $R$ is a POVM with only two occurring outcomes, that is the matrix elements of this POVM are: $M_j \neq 0$, $M_l \neq 0, M_k = 0$, for all $k \in \mathcal{P}$, where $\mathcal{P} = \{0,1,...,d-1\} \setminus \{j, l\}$.
\end{lemma}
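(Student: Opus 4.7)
The plan is to reduce the $d$-outcome discrimination problem to a two-state Helstrom discrimination between $\tilde{\rho}^j_R$ and $\tilde{\rho}^l_R$ by exploiting SDP duality.

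First, I would write out the post-measurement states explicitly. Using $F^\dag\ket{l}=\frac{1}{\sqrt{d}}\sum_{k}\omega^{-lk}\ket{k}$ and $F\ket{j}=\frac{1}{\sqrt{d}}\sum_{k}\omega^{jk}\ket{k}$, one obtains $\braket{x|\phi_{jl}}=c\bigl(\delta_{xj}+\omega^{l(j-x)}/\sqrt{d}\bigr)$ and $\braket{x|F|\phi_{jl}}=c\bigl(\omega^{jx}/\sqrt{d}+\omega^{jl}\delta_{xl}\bigr)$. Substituting into Eq.~\eqref{eq:rho_x} shows that for $x\in\mathcal{P}$ the state
\begin{equation}
\tilde{\rho}^x_R=\frac{c^2}{2d}\begin{pmatrix}1 & \gamma\,\omega^{jl-(j+l)x}\\ \gamma\,\omega^{(j+l)x-jl} & 1\end{pmatrix}
\end{equation}
has a flat diagonal of size $c^2/(2d)$ and an off-diagonal of magnitude $\gamma c^2/(2d)$ carrying only a pure phase, whereas $\tilde{\rho}^j_R$ and $\tilde{\rho}^l_R$ each carry one dominant diagonal entry of order $c^2(1+1/\sqrt{d})^2/2$.

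Second, I would cast $\pguess(\gamma,d,\ket{\phi_{jl}})$ as a semidefinite program whose dual is
\begin{equation}
\min\{\Tr Y \colo Y=Y^\dag,\ Y\succeq\tilde{\rho}^x_R\ \forall x\}.
\end{equation}
The natural candidate dual solution is the Helstrom dual for the pair $\{\tilde{\rho}^j_R,\tilde{\rho}^l_R\}$, namely $Y^{\ast}=\tfrac{1}{2}(\tilde{\rho}^j_R+\tilde{\rho}^l_R)+\tfrac{1}{2}|\tilde{\rho}^j_R-\tilde{\rho}^l_R|$. By construction $Y^{\ast}-\tilde{\rho}^j_R$ and $Y^{\ast}-\tilde{\rho}^l_R$ are the negative and positive parts of $\tilde{\rho}^j_R-\tilde{\rho}^l_R$ respectively, so both are positive semidefinite, and $\Tr Y^{\ast}=\tfrac{1}{2}\bigl(\Tr(\tilde{\rho}^j_R+\tilde{\rho}^l_R)+\|\tilde{\rho}^j_R-\tilde{\rho}^l_R\|_{1}\bigr)$ exactly matches the Helstrom value achieved by the two-outcome POVM supported on $\{j,l\}$ via Eq.~\eqref{eq:Helstrom}.

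Third, and this is the crux of the argument, I would verify the remaining dual constraints $Y^{\ast}\succeq\tilde{\rho}^x_R$ for every $x\in\mathcal{P}$. Since everything is a $2\times 2$ matrix, this reduces to two scalar conditions on $Y^{\ast}-\tilde{\rho}^x_R$: non-negative trace, which follows immediately because the diagonal of $Y^{\ast}$ is of order $c^2(1+1/\sqrt{d})^2/2$ while the diagonal of $\tilde{\rho}^x_R$ is only $c^2/(2d)$; and non-negative determinant, where one must show that the off-diagonal mismatch between $Y^{\ast}$ and $\tilde{\rho}^x_R$ (involving the phase combination $\omega^{-j^2}+\omega^{-l^2}$ from $\tilde{\rho}^j_R+\tilde{\rho}^l_R$ together with $\omega^{jl-(j+l)x}$ from $\tilde{\rho}^x_R$, all modulated by $\gamma\le 1$) is controlled by the large diagonal surplus.

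Once dual feasibility of $Y^{\ast}$ is established, weak duality yields $\pguess(\gamma,d,\ket{\phi_{jl}})\le\Tr Y^{\ast}$, and since this upper bound is already saturated by the two-outcome Helstrom POVM on $\{j,l\}$, that POVM is optimal and all $M_k$ with $k\in\mathcal{P}$ may be set to zero, proving the lemma. The main obstacle is the third step: establishing the $2\times 2$ PSD inequality $Y^{\ast}\succeq\tilde{\rho}^x_R$ uniformly in the discrete data $(j,l,x)$ with $j\neq l$, $x\in\mathcal{P}$, in the continuous parameter $\gamma\in[0,1]$, and in the dimension $d\geq 2$, which requires a careful bookkeeping of the root-of-unity phases so as to bound the off-diagonal mismatch tightly rather than by the naive triangle inequality.
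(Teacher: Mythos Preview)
Your plan is exactly the paper's: propose the Helstrom dual $Y^\ast=\tfrac12(\tilde\rho^j_R+\tilde\rho^l_R+|G|)$ with $G=\tilde\rho^j_R-\tilde\rho^l_R$, observe that $\Tr Y^\ast$ matches the two-outcome value, and check feasibility against the remaining $\tilde\rho^x_R$. The difference is that you stop at the crux, whereas the paper actually carries out step~3, and it does so with two simplifications you have not identified.

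First, the paper does not verify $Y^\ast\succeq\tilde\rho^x_R$ directly. Since $|G|\succeq 0$, it suffices to prove the stronger inequality
\[
D:=\tilde\rho^j_R+\tilde\rho^l_R-2\tilde\rho^x_R\;\succeq\;0,
\]
which eliminates the absolute value $|G|$ (whose off-diagonal entries would drag in square roots) and leaves a clean $2\times 2$ matrix with diagonal $A^2-B^2$ and off-diagonal $\gamma B(A\omega^{-j^2}+A\omega^{-l^2}-2B\omega^{jl-(j+l)x})$, where $A=c(1+1/\sqrt d)$ and $B=c/\sqrt d$. Second, $\det D$ is affine in $\gamma^2$; at $\gamma=0$ it equals $\tfrac14(A^2-B^2)^2\geq 0$, so it is enough to check $\gamma=1$.

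Your worry that ``the naive triangle inequality'' will fail is only half right. At $\gamma=1$, bounding every cosine by $\pm 1$ gives $\det D\geq \tfrac{B^4}{4}\bigl(d^2+4d\sqrt d-16\sqrt d-16\bigr)$, which is non-negative for all $d\geq 4$; only the case $d=3$ escapes the crude bound and is settled by direct enumeration of the finitely many $(j,l,x)$. So the missing ingredient is not a sharper phase estimate but the strengthening $Y^\ast-\tilde\rho^x_R\succeq\tfrac12 D\succeq 0$ together with the monotonicity reduction in $\gamma$.
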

\begin{proof}
The case $d=2$ is trivial, since then there are only two output states.

Now considering the general case, let $\lambda_{\text{min}}(\gamma, d, \ket{\phi_{jl}})$ denote the guessing probability corresponding to this restricted POVM. The ``min'' subscript indicates that this guessing probability is a lower bound on $\pguess(\gamma, d, \ket{\phi_{jl}})$, the guessing probability optimised over all POVMs. That is: $\lambda_{\text{min}}(\gamma, d, \ket{\phi_{jl}}) \leq \pguess(\gamma, d, \ket{\phi_{jl}})$. We then have:
\begin{equation}
\lambda_{\text{min}}(\gamma, d, \ket{\phi_{jl}}) = \max_{M_j, M_l}\Tr[M_j \tilde{\rho}^j_R(\gamma, d, \ket{\phi_{jl}})] + \Tr[M_l \tilde{\rho}^l_R(\gamma, d, \ket{\phi_{jl}})] \, ,
\end{equation}
Effectively this is again the problem of distinguishing 2 states solved by Helstrom~\cite{Helstrom}, the only difference is that this time $p_j(d, \ket{\phi_{jl}}) + p_l(d, \ket{\phi_{jl}}) \leq 1$. Hence
\begin{equation}
\lambda_{\text{min}}(\gamma, d, \ket{\phi_{jl}}) = \frac{1}{2} \left[\norm{G(\gamma, d, \ket{\phi_{jl}})}_1  + p_j(d, \ket{\phi_{jl}}) + p_l(d, \ket{\phi_{jl}})\right] \, ,
\end{equation}
where $G(\gamma,d, \ket{\phi_{jl}}) = \tilde{\rho}^j_R(\gamma, d, \ket{\phi_{jl}}) - \tilde{\rho}^l_R(\gamma, d, \ket{\phi_{jl}})$. Now we will show that this bound is tight, i.e. we will show that the above $\lambda_{\text{min}}(\gamma, d, \ket{\phi_{jl}})$ is in fact also an upper bound on $\pguess(\gamma, d, \ket{\phi_{jl}})$. For this purpose let us consider the dual program~\cite{Watrous} in which we consider all matrices 
\begin{equation}
Q(\gamma, d, \ket{\phi_{jl}}) \in \mathcal{Z}, \text{ where } \mathcal{Z} = \{Q \in \mathbb{C}^{2 \times 2}: Q = Q^\dag \land \forall k \in \{0,1,...,d-1\}, Q(\gamma, d, \ket{\phi_{jl}}) \geq \tilde{\rho}^k_R(\gamma, d, \ket{\phi_{jl}})\} \, .
\end{equation}
Then for each $Q \in \mathcal{Z}$ we define $\lambda_{\textnormal{max}}^Q(\gamma, d, \ket{\phi_{jl}}) = \Tr[Q(\gamma, d, \ket{\phi_{jl}})]$. From this it follows that $\pguess(\gamma, d, \ket{\phi_{jl}}) \leq \lambda_{\textnormal{max}}^Q(\gamma, d, \ket{\phi_{jl}})$ for all $Q \in \mathcal{Z}$~\cite{Watrous} and so $\lambda_{\textnormal{max}}^Q(\gamma, d, \ket{\phi_{jl}})$ is an upper bound on $\pguess(\gamma, d, \ket{\phi_{jl}})$. For simplicity, we will now omit writing explicitly the dependence on $\gamma, d$ and $\ket{\phi}$. Consider a hermitian matrix:
\begin{equation}
Q' = \frac{1}{2}(\tilde{\rho}^j_R + \tilde{\rho}^l_R + \abs{G}) \, .
\end{equation}
Then:
\begin{equation}
\Tr[Q'] =  \frac{1}{2}(p_j +p_l + \norm{G}_1) = \lambda_{\text{min}} \, .
\end{equation}
Now, if $Q'$ satisfies $Q' \geq \tilde{\rho}^k_R, \forall k$, then $Q' \in \mathcal{Z}$ and so $\Tr[Q'] = \lambda_{\textnormal{max}}^{Q'}$. And since then $\Tr[Q'] =\lambda_{\textnormal{min}} = \lambda_{\textnormal{max}}^{Q'}$, this means that $\Tr[Q'] = \pguess$. Hence, we will now prove that $\forall d \geq 3, \gamma \in [0,1]$ we have $Q' \in \mathcal{Z}$.

Consider
\begin{equation}
Q' - \tilde{\rho}^j_R = \frac{1}{2} ( - \tilde{\rho}^j_R +  \tilde{\rho}^l_R +  \abs{G}) = \frac{1}{2}(- G + \abs{G}) \, .
\end{equation}
Note that $\abs{G} \geq G$ and so $Q' - \tilde{\rho}^j_R \geq 0$. Hence $Q' \geq  \tilde{\rho}^j_R$. Analogously
\begin{equation}
Q' - \tilde{\rho}^l_R = \frac{1}{2}  (\tilde{\rho}^j_R -  \tilde{\rho}^l_R +  \abs{G}) =\frac{1}{2}(G + \abs{G}) \, .
\end{equation}
Clearly: $\abs{G} \geq -G$ and so $Q' - \tilde{\rho}^l_R \geq 0$. Hence $Q' \geq  \tilde{\rho}^l_R$.

Now we need to prove that $Q' \geq \tilde{\rho}^k_R, \forall k \in \mathcal{P}$ and for all $\gamma \in [0,1], d \geq 3$. In order to do that, we need to explicitly calculate all the output states of the register $R$. Those states are:
\begin{align}
\tilde{\rho}^j_R(\gamma, d, \ket{\phi_{jl}}) &= \frac{1}{2} \left( \begin{array}{cc}
A^2 & \gamma AB\omega^{-j^2}  \\
\gamma AB\omega^{j^2}  & B^2
 \end{array} \right) \, , \\
\tilde{\rho}^l_R(\gamma, d, \ket{\phi_{jl}}) &= \frac{1}{2} \left( \begin{array}{cc}
B^2 & \gamma AB \omega^{-l^2}  \\
\gamma AB \omega^{l^2} & A^2
 \end{array} \right) \, , \\
\tilde{\rho}^k_R(\gamma, d,\ket{\phi_{jl}}) &= \frac{B^2}{2} \left( \begin{array}{cc}
1 & \gamma \omega^{jl-jk-kl}  \\
\gamma \omega^{jk+kl-jl} & 1
 \end{array} \right) \, ,
 \end{align}
 where $A = c\left(1+\frac{1}{\sqrt{d}}\right), B = \frac{c}{\sqrt{d}}, k \in \mathcal{P}$. Then $Q' - \tilde{\rho}^k_R = \frac{1}{2} (\tilde{\rho}^j_R + \tilde{\rho}^l_R - 2\tilde{\rho}^k_R + \abs{G})$. Consider the operator:
\begin{equation}
D = \tilde{\rho}^j_R + \tilde{\rho}^l_R - 2\tilde{\rho}^k_R = \frac{1}{2} \left( \begin{array}{cc}
A^2-B^2 & \gamma B (A\omega^{-j^2} + A \omega^{-l^2} -2B \omega^{jl-jk-kl})  \\
\gamma B (A\omega^{j^2} + A \omega^{l^2} -2B \omega^{jk+kl-jl}) & A^2-B^2
 \end{array} \right) \, .
 \end{equation}
 We will now show that for all $k \in \mathcal{P}$ we have $D \geq 0$. Note that for $2 \times 2$ matrices, $D \geq 0$ if and only if $\Tr[D] \geq 0$ and $\textnormal{Det}(D) \geq 0$. Firstly, we see that $\Tr[D] = A^2-B^2 \geq 0, \forall d \geq 3$. Secondly, the determinant of $D$ is:
\begin{equation}
\begin{aligned}
\textnormal{Det}(D) 	&= \frac{1}{4}\left[(A^2-B^2)^2 - \gamma^2 B^2 \left(2A^2 + 4B^2 + 2A^2 \cos\left(\frac{2 \pi(j^2 - l^2)}{d}\right)\right.\right.\\
				&\left. \left. - 4AB\cos\left(\frac{2 \pi (j^2-jk-kl+jl)}{d}\right) - 4AB\cos\left(\frac{2 \pi (l^2-jk-kl+jl)}{d}\right)\right)\right] \, .
\end{aligned}
\end{equation}
Now we want to show that $\textnormal{Det}(D) \geq 0$ for all $j,l \in \{0,1,...,d-1\}, k \in \mathcal{P}, \gamma \in [0,1], d \geq 3$. From the above expression we see that $\textnormal{Det}(D)$ is monotonic in $\gamma \in [0,1]$. Clearly for $\gamma = 0, \textnormal{Det}(D) = \frac{1}{4}(A^2 - B^2)^2 \geq 0$. For $\gamma = 1$, we have:
\begin{equation}
\begin{aligned}
\textnormal{Det}(D) 	&= \frac{1}{4}\left[A^4 - 3B^4 - 4A^2B^2 - 2A^2B^2 \cos\left(\frac{2 \pi(j^2-l^2)}{d}\right)\right. \\
 				&\left. + 4AB^3 \cos\left(\frac{2 \pi (j^2 - jk - kl + jl)}{d}\right) +4AB^3 \cos\left(\frac{2 \pi (l^2-jk-kl+jl)}{d}\right)\right] \, .
\label{eq:detD}
\end{aligned}
\end{equation}
Note that $A = B(1+\sqrt{d})$. Thus we see that:
\begin{equation}
\begin{aligned}
\textnormal{Det}(D) 	&= \frac{B^4}{4}\left[(1+\sqrt{d})^4 - 3 - 4(1+\sqrt{d})^2 - 2(1+\sqrt{d})^2 \cos\left(\frac{2 \pi(j^2-l^2)}{d}\right)\right. \\
				&\left. + 4(1+\sqrt{d}) \cos\left(\frac{2 \pi (j^2 - jk - kl + jl)}{d}\right) +4(1+\sqrt{d}) \cos\left(\frac{2 \pi (l^2-jk-kl+jl)}{d}\right)\right]\\
				&\geq \frac{B^4}{4}\left[(1+\sqrt{d})^4 - 3 - 4(1+\sqrt{d})^2 - 2(1+\sqrt{d})^2 - 4(1+\sqrt{d})  - 4(1+\sqrt{d})\right] \\
				& =  \frac{B^4}{4}\left(d^2 + 4 d\sqrt{d} - 16 \sqrt{d} -16\right) \, .
\label{eq:detDbound}
\end{aligned}
\end{equation}
Let 
\begin{equation}
y(d) = \left(d^2 + 4 d\sqrt{d} - 16 \sqrt{d} -16\right) \, ,
\end{equation}
then $\textnormal{Det}(D) \geq \frac{B(d)^4}{4}y(d)$. Clearly $B(d) \geq 0, \forall d\geq 3$ and $y(d)  \geq 0, \forall d \geq 4$. Hence $\textnormal{Det}(D) \geq 0,  \forall d \geq 4$. For $d=3$ we use the exact expression from the first part of Eq.~\eqref{eq:detDbound} and we find that for all the cases $j \neq l$, $\forall k \in \mathcal{P}$, $\textnormal{Det}(D) \geq 0$. Hence  $\textnormal{Det}(D) \geq 0,  \forall d \geq 3$. Since both $\textnormal{Det}(D) \geq 0$ and $\Tr[D] \geq 0$, $D \geq 0$ and so $Q' \geq \tilde{\rho}^k_R, \forall k \in \{0,1,...,d-1\}$ and for all $\gamma \in [0,1], d \geq 3$. Therefore $Q' \in \mathcal{Z}$ and
\begin{equation}
\Tr[Q'] = \lambda_{\textnormal{max}}^Q(\gamma, d, \ket{\phi_{jl}}) = \lambda_{\textnormal{min}}(\gamma, d, \ket{\phi_{jl}})=\pguess(\gamma, d, \ket{\phi_{jl}}) \, .
\end{equation}
\end{proof}
Now, knowing that the strategy of distinguishing only the two most probable outcomes for the input state $\ket{\phi_{jl}} = c \left(\ket{j} + \omega^{jl}F^\dag \ket{l}\right)$, such that $j \neq l$ is actually an optimal strategy for those states, we can calculate the guessing probability for these states for all $d \geq 2$ and for all $\gamma \in [0,1]$:
\begin{equation}
\begin{aligned}
\pguess(\gamma, d, \ket{\phi_{jl}}) 	&= \frac{1}{2}(p_j +p_l + \norm{G}_1)\\ 
				&=\frac{1}{4(d+ \sqrt{d})}\left(2+2\sqrt{d}+d+\sqrt{d(2+\sqrt{d})^2 + 2\gamma^2 (1+\sqrt{d})^2 \left(1 - \cos\left(\frac{2 \pi (j^2 - l^2)}{d}\right)\right)}\right) \, .
\end{aligned}
\label{eq:pguessforlookingat2}
\end{equation}
Clearly for $\gamma = 0$ the above expression reduces to Eq.~\eqref{eq:pguessmaxgamma0}. That is $\pguess(\gamma=0, d, \ket{\phi_{jl}}) = \pguessmax(\gamma=0, d)$, since the states for which we have evaluated $\pguess(\gamma, d)$ above are the optimal states for $\gamma = 0$. Note that $A^2 = \pguessmax(\gamma=0, d)$ and so it is easy to see that for $\gamma=0$ the optimal measurement is:
\begin{equation}
M_j =  \left( \begin{array}{cc}
1 & 0  \\
0 & 0
 \end{array} \right) \, ,  \quad\quad
M_l = \left( \begin{array}{cc}
0 & 0  \\
0 & 1
 \end{array} \right) \, ,  \quad\quad
M_k = 0 \, , \quad\quad  \forall k \in \mathcal{P} \, .
\end{equation}
We can also see that for the game with $d=2$, the two cases $j=0, l=1$ and $j=1, l=0$ correspond to the two optimal states for all $\gamma \in [0,1]$. Hence, for these cases the above equation reduces to Eq.~\eqref{eq:pguessmaxqubit}.
\begin{lemma}\thlabel{monotonicity}
There exist states for which $\pguess(\gamma_1, d, \ket{\phi}) > \pguess(\gamma_2, d, \ket{\phi})>\pguessmax(\gamma=0, d),$ for $\gamma_1>\gamma_2>0, \forall d \geq 2$.
\end{lemma}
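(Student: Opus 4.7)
The plan is to leverage the closed-form expression already obtained in Lemma~\ref{lookingon2} for the specific family of states $\ket{\phi_{jl}} = c(\ket{j}+\omega^{jl}F^\dag\ket{l})$ from Proposition~\ref{classical}. By Eq.~\eqref{eq:pguessforlookingat2}, for every $d \geq 2$ and any pair $j \neq l$ we have the explicit formula
\begin{equation*}
\pguess(\gamma, d, \ket{\phi_{jl}}) = \frac{1}{4(d+\sqrt{d})}\left(2+2\sqrt{d}+d+\sqrt{K_1 + K_2\,\gamma^2}\right),
\end{equation*}
where $K_1 := d(2+\sqrt{d})^2 > 0$ and $K_2 := 2(1+\sqrt{d})^2\left(1-\cos\!\tfrac{2\pi(j^2-l^2)}{d}\right)\ge 0$. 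This is the only ingredient I would need.

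First I would choose indices that make the $\gamma^2$ coefficient nonzero uniformly in $d$. The canonical choice is $j=0,\, l=1$: then $j^2-l^2 = -1$, so $\cos(-2\pi/d) < 1$ for every $d\geq 2$, giving $K_2 > 0$. (Any other pair with $j^2\not\equiv l^2\pmod d$ would work equally well.)

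Next I would observe that, with $K_1 > 0$ and $K_2 > 0$, the map $\gamma \mapsto \sqrt{K_1+K_2\gamma^2}$ is strictly increasing on $[0,\infty)$, hence so is $\gamma \mapsto \pguess(\gamma, d, \ket{\phi_{01}})$ on $[0,1]$. In particular, for any $\gamma_1 > \gamma_2 > 0$ one immediately gets $\pguess(\gamma_1, d, \ket{\phi_{01}}) > \pguess(\gamma_2, d, \ket{\phi_{01}})$. Finally, evaluating the formula at $\gamma = 0$ reproduces $\pguess(0, d, \ket{\phi_{01}}) = \tfrac{1}{2}(1+1/\sqrt{d}) = \pguessmax(\gamma=0,d)$, consistently with Proposition~\ref{classical}, since $\ket{\phi_{01}}$ is among the optimal inputs for the classical game. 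Strict monotonicity on $[0,1]$ then yields $\pguess(\gamma_2, d, \ket{\phi_{01}}) > \pguessmax(\gamma=0,d)$ for any $\gamma_2 > 0$, completing the chain of inequalities.

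There is really no substantial obstacle: the lemma is essentially a direct reading of Eq.~\eqref{eq:pguessforlookingat2} once one has the two earlier results in hand. The only point requiring a sentence of justification is the choice of $(j,l)$ ensuring $K_2 > 0$, which is why I would single out $j=0,\,l=1$ explicitly; all the rest (monotonicity and matching the classical value at $\gamma=0$) is immediate from the form $a + \sqrt{b + c\gamma^2}$ of the guessing probability.
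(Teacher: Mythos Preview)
Your proposal is correct and follows essentially the same route as the paper: both invoke the closed form of Eq.~\eqref{eq:pguessforlookingat2} for the states $\ket{\phi_{jl}}$ and observe that the $\gamma^2$ coefficient is strictly positive whenever $(j^2-l^2)/d\notin\mathbb{Z}$, yielding strict monotonicity and the value $\pguessmax(\gamma=0,d)$ at $\gamma=0$. The only cosmetic difference is that the paper phrases the condition for the whole family while you single out the concrete witness $j=0,\,l=1$, which is entirely sufficient for an existence claim.
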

\begin{proof}
Consider all input states of the form $\ket{\phi_{jl}} = c \left(\ket{j} + \omega^{jl}F^\dag \ket{l}\right)$ such that $\frac{j^2-l^2}{d} \notin \mathbb{Z}$ and  $\forall d \geq 2$. Then firstly, $j \neq l$ and so the guessing probability corresponding to those states is given by Eq.~\eqref{eq:pguessforlookingat2} and secondly the coefficient in front of $\gamma^2$ is positive. Hence in these cases $\pguess(\gamma, d, \ket{\phi_{jl}})$ is monotonically increasing in $\gamma \in [0,1], \forall d \geq 2$. Hence, $\forall d \geq 2$, for all input states $\ket{\phi_{jl}} = c \left(\ket{j} + \omega^{jl}F^\dag \ket{l}\right)$ such that $\frac{j^2-l^2}{d} \notin \mathbb{Z}$ we have $\pguess(\gamma_1, d, \ket{\phi_{jl}}) > \pguess(\gamma_2, d, \ket{\phi_{jl}})>\pguessmax(\gamma=0, d),$ for $\gamma_1>\gamma_2>0$.
\end{proof}

Theorem~\ref{coherencevsnocoherence} follows directly from the above lemma by noting that $\pguessmax(\gamma, d) \geq \pguess(\gamma, d, \ket{\phi})$, for all $\gamma \in [0,1], d \geq 2$ and for all states $\ket{\phi}$.

One can also see that for the input states $\ket{\phi_{jl}} = c \left(\ket{j} + \omega^{jl}F^\dag \ket{l}\right)$ with $ j \neq l$ but with $\frac{j^2-l^2}{d} \in \mathbb{Z}$, Eq.~\eqref{eq:pguessforlookingat2} reduces to $\pguess(\gamma, d, \ket{\phi_{jl}}) = \frac{1}{2}\left(1 + \frac{1}{\sqrt{d}}\right) = \pguessmax(\gamma = 0, d)$. That is for those states $\pguess(\gamma, d, \ket{\phi_{jl}})$ stays constant in $\gamma$ for all $d$.

\section{Coherence and quantum correlations}
\label{sec:appendixD}

To give a deeper insight into the relation between the guessing probability and the coherence $\gamma$, we also look at the correlations between the registers $B$, $R$ and $P$ (the initial purification of $R$), at times $t_1$, $t_2$ and $t_3$ in Fig.~\ref{fig:circuit} (in the main article). Specifically, we focus on the two-dimensional game with optimal input states. We then quantify the arising correlations using min-entropy and the results are depicted in Fig.~\ref{fig:Hmin}. It needs to be noted that independently of the dimension of our game, Bob's requirements for perfect guessing are perfect classical correlations between $R$ and $X$, the classical register denoting the measurement outcome after Alice has performed her measurement on the system $B$ at time $t_3$ in Fig.~\ref{fig:circuit}. However, classical correlations are basis dependent and effectively the measurement of Alice involves two mutually unbiased bases. Hence it is impossible to have perfect guessing with just classically correlating the two systems before the measurement. From the perspective of the quantum circuit in Fig.~\ref{fig:circuit}, those perfect classical correlations that arise after the conditional Fourier transform will never be perfectly aligned with the measurement basis of Alice (standard basis). As a result, even if the system is classically perfectly correlated before the measurement, the correlations are no longer maximal after the measurement on $B$. For two-dimensional game, this can be seen in Fig.~\ref{fig:Hmin} where for $\gamma=0, \Hmin(B|R) = 0$, but $\Hmin(X|R)>0$. The advantage for Bob coming from the quantum coherence in register $R$ and the resulting quantum correlations is that for maximal entanglement (which is possible if $d=2$), independently of the basis in which the system $B$ has been measured, the outcomes of that measurement are maximally correlated with the state of the register $R$. Hence, if the two systems become maximally entangled ($\Hmin(B|R) = -1$ for $\gamma = 1$), then the post-measurement state becomes classically maximally correlated ($\Hmin(X|R) = 0$) enabling perfect guessing.
\begin{figure}[h]
  \centering
    \includegraphics[width=0.7\textwidth]{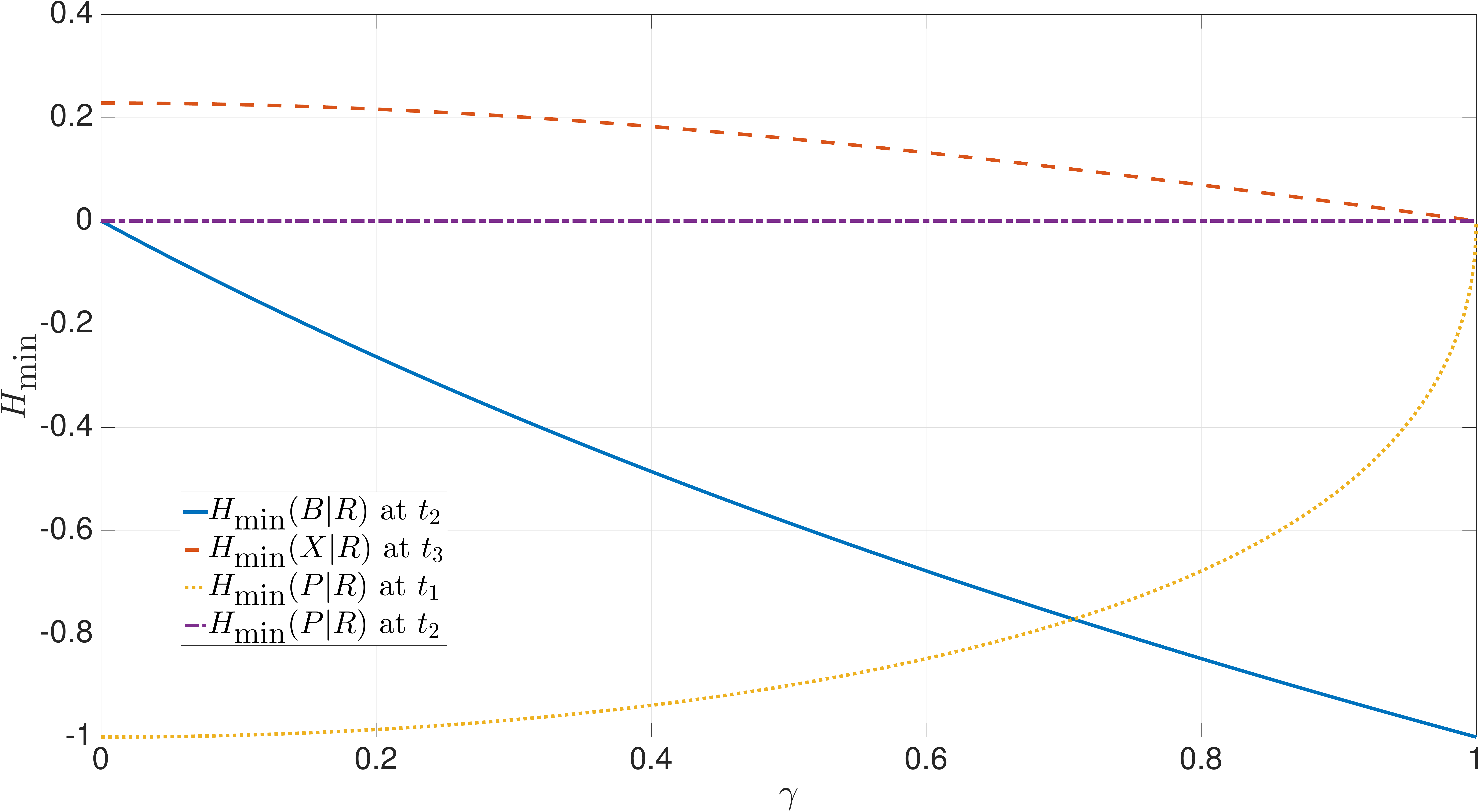}
    \caption{Conditional min-entropies as a function of $\gamma$ for the two-dimensional game ($d=2$) with Bob's input state $\ket{\phi_{01}} = c(\ket{0} + \ket{-})$ or $\ket{\phi_{10}} = c(\ket{1} + \ket{+})$. The blue solid line corresponds to the $\Hmin(B|R)$ at time $t_2$ in Fig.~\ref{fig:circuit}. The red dashed line shows $\Hmin(X|R)$ at time $t_3$ after Alice's measurement, where the state is averaged over all the outcomes, as Bob does not have access to the measurement result. The yellow dotted line corresponds to $\Hmin(P|R)$ at time $t_1$ and hence shows the initial quantum correlations between $R$ and its purification $P$. The correlations between those systems at time $t_2$ are illustrated by the purple dash-dotted flat line $\Hmin(P|R) =0$. By comparing the blue solid and red dashed lines, one can see that for $\gamma=1$ the increase of the conditional entropy between $\Hmin(B|R)$ and $\Hmin(X|R)$ due to the measurement on $B$ is the greatest possible, that is, it is equal to 1. The reason is that the measurement is the most destructive in this case, as it destroys all the quantum correlations of a maximally entangled state. On the other end of the spectrum, if $\gamma=0$, there are no quantum correlations between $B$ and $R$ present and so the measurement has a relatively small influence on the system. It only affects the classical correlations, which are not aligned with the standard basis in which the measurement performed by Alice takes place (the final measurement in the circuit in Fig.~\ref{fig:circuit}). Hence, in this case the increase of conditional entropy is small. Comparing the yellow dotted and blue solid lines we see that decreasing the amount of entanglement between $P$ and $R$ results in the increase in the amount of entanglement between $B$ and $R$ that can be generated using the controlled Fourier transform. Finally, from the flat purple dash-dotted line we see that independently of the coherence of $R$ and its initial correlations with $P$, the correlations between those two systems at time $t_2$ can be only classical. All the above entropies are derived in Appendix~\ref{sec:entropies}.}
		\label{fig:Hmin}
\end{figure}

\section{Conditional min-entropies for the two-dimensional game}
\label{sec:entropies}

The controlled Fourier transform in the circuit in Fig.~\ref{fig:circuit} (in the main article) results in (quantum) correlations between the two systems $B$ and $R$. These correlations are exploited by Bob in order to guess the measurement outcome on the state $\rho_B$. However, this measurement has a destructive effect on these correlations. Here we quantify this destructive effect of the measurement using min-entropy.
The conditional min-entropy will be calculated using the definition presented in~\cite{Konig}. Firstly let us define a correlation measure:
\begin{equation}
q_{\text{corr}}(B|R) = d \: \underset{\mathcal E}{\text{max}}\: F\left((\mathcal E_R \otimes \: \id_B)(\rho_{RB}),\dya{\Psi}_{RB} \right)^2 \, ,
\label{eq:qcorr}
\end{equation}
where $F$ is fidelity defined using the trace norm as $F(\rho,\sigma) = ||\sqrt{\rho}\sqrt{\sigma}||_1$ (when one of the states is pure, that is when $\sigma =  \dya{\Psi}$, the fidelity reduces to $F(\rho,\sigma) = \sqrt{\bra{\Psi}\rho \ket{\Psi}}$), $d$ is the dimension of subsystem $B$, $\mathcal E$ is a local operation described by a trace-preserving completely positive map and $\ket{\Psi}$ is a maximally entangled state (note that $q_{\text{corr}}(B|R)$ is independent of which maximally entangled state we use, since all such states are the same up to a unitary rotation on one of the qudits; this rotation can always be compensated on $\rho_{RB}$ by the corresponding rotation on system $R$ as part of the local operation $\mathcal E$). Then one can calculate the conditional min-entropy of a quantum-quantum (qq) state as $\Hmin(B|R)=-\log(q_{\text{corr}}(B|R))$. Note that for classical-quantum (cq) states, $q_{\text{corr}}(X|R)$ becomes the guessing probability $p_{\text{guess}}(X|R)$ (here X denotes the classical subsystem)~\cite{Konig}.

We are interested in the relation between the min-entropy $\Hmin(B|R)$ of a qq-state (the min-entropy of the input state $\rho_B$ before Alice's measurement, given access to $R$) and the min-entropy $\Hmin(X|R)$ of the cq-state after the measurement has been performed (the min-entropy of the classical outcome $X$ after Alice's measurement, given access to $\rho_R$). For that purpose we will investigate the tightness of the inequality derived in~\cite{Berta}:
\begin{equation}
\Hmin(X|R) \leq \Hmin(B|R) + \log(d) \, ,
\label{eq:Hmininequality}
\end{equation}
where $d$ is the dimension of the outcome space. This inequality tells us that for two-dimensional states, the increase of the conditional min-entropy due to the measurement cannot exceed 1.

For $d=2$ we will now calculate both of those entropies explicitly starting with $\Hmin(B|R)$. In our calculation let us pick one of the two states which give us the maximum guessing probability for all values of $\gamma$, namely $\ket{\phi_{10}}$ which in the Bloch sphere representation can be expressed as $(c_x, c_y, c_z) =\left(\frac{1}{\sqrt{2}}, 0,  -\frac{1}{\sqrt{2}}\right)$ [one can analogously show that the other state $\ket{\phi_{01}}$ or equivalently $(c_x, c_y, c_z) =\left(-\frac{1}{\sqrt{2}}, 0,  \frac{1}{\sqrt{2}}\right)$ will give exactly the same $\Hmin(B|R)$]. For this input state, the overall state ${\rho'_{RB}(\gamma, d=2, \ket{\phi})}$ before the measurement at time $t_2$ in Fig.~\ref{fig:circuit} is:
\begin{equation}
\begin{aligned}
\rho'_{RB}(\gamma, d=2, \ket{\phi}) 	&= \frac{1}{4}\left(\dya{0}_R \otimes (\mathbb{I}+ \frac{1}{\sqrt{2}} (\sigma_x -\sigma_z))+ \gamma[\dyad{0}{1}_R\otimes(H_B+ \frac{1}{\sqrt{2}} (\sigma_x - \sigma_z) H_B)\right. \\
					& \left. + \dyad{1}{0}_R\otimes(H_B+ \frac{1}{\sqrt{2}} H_B (\sigma_x -\sigma_z))]+ \dya{1}_R \otimes (\mathbb{I}+ \frac{1}{\sqrt{2}} (\sigma_z - \sigma_x))\right) \, .
\end{aligned}
\end{equation}
We can now diagonalise this state so that we obtain:
\begin{equation}
\rho'_{RB}(\gamma, d=2, \ket{\phi}) = \frac{1+\gamma}{2} \dya{\psi_1} + \frac{1-\gamma}{2} \dya{\psi_2} \, ,
\end{equation}
where the eigenstates written in their Schmidt bases are: 
\begin{align}
\ket{\psi_1} &= \frac{1}{\sqrt{2}}(\ket{0'}_R \ket{1}_B + \ket{1'}_R \ket{0}_B) \, , \\
\ket{\psi_2} &= \frac{1}{\sqrt{2}}(\ket{0''}_R \ket{1}_B + \ket{1''}_R \ket{0}_B) \, .
\end{align}
The Schmidt bases: $\{\ket{0'}, \ket{1'}\}$ and $\{\ket{0''}, \ket{1''}\}$ are given by:
\begin{align}
\ket{0'} &= \frac{1}{\sqrt{2}}\left(\frac{1}{\sqrt{2-\sqrt{2}}}\ket{0} - \frac{1}{\sqrt{2+\sqrt{2}}}\ket{1}\right) \, , \\
\ket{1'} &= \frac{1}{\sqrt{2}}\left(\frac{1}{\sqrt{2+\sqrt{2}}}\ket{0} + \frac{1}{\sqrt{2-\sqrt{2}}}\ket{1}\right) \, , \\
\ket{0''} &= \frac{1}{\sqrt{2}}\left(\frac{1}{\sqrt{2-\sqrt{2}}}\ket{0} + \frac{1}{\sqrt{2+\sqrt{2}}}\ket{1}\right) \, , \\
\ket{1''} &= \frac{1}{\sqrt{2}}\left(\frac{1}{\sqrt{2+\sqrt{2}}}\ket{0} - \frac{1}{\sqrt{2-\sqrt{2}}}\ket{1}\right) \, .
\end{align}
The states $\ket{\psi_1}$ and $\ket{\psi_2}$ are mutually orthogonal maximally entangled states. 
To calculate $\Hmin(B|R)$ we use the formulation of the min-entropy in terms of the semi-definite programmes, as expressed in~\cite{Konig}. The primal, as stated before, is $\Hmin(B|R)=-\log(q_{\text{corr}}(B|R))$ where $q_{\text{corr}}(B|R)$ is given in Eq.~\eqref{eq:qcorr}. The dual problem is:
\begin{equation}
\Hmin(B|R) = - \log \underset{\begin{subarray}{c}
  \sigma_R \geq 0 \\
   \sigma_R \otimes \mathbb{I}_B \geq \rho_{RB}
  \end{subarray}} {\min} \Tr(\sigma_R) \, .
  \label{eq:dual}
\end{equation}
For the primal programme, let us consider a local transformation $\mathcal E$ acting on subsystem $R$ which performs a rotation such that the state will now be diagonal in the basis that includes $\ket{\Psi}_{RB}$, with maximal probability in this mixture corresponding to the state $\ket{\Psi}_{RB}$. This feasible solution gives:
\begin{equation}
\underset{\mathcal E}{\text{max}}\: F\left((\mathcal E \otimes \: \id_B)(\rho'_{RB}),\dya{\Psi}_{RB}\right) \geq \sqrt{\frac{1+\gamma}{2}} \, .
\end{equation}
Hence:
\begin{equation}
q_{\text{corr}}(B|R) \geq 1+\gamma \, ,
\end{equation}
and so:
\begin{equation}
\Hmin(B|R) = -\log q_{\text{corr}}(B|R) \leq -\log (1+\gamma) \, .
\end{equation}
Similarly, for the dual programme, let us consider a matrix $\displaystyle \sigma_R = \left(\frac{1 + \gamma}{2}\right) \mathbb{I}_R \geq 0$. Then $\displaystyle \sigma_R \otimes \mathbb{I}_B =  \left(\frac{1+ \gamma}{2}\right) \mathbb{I}_{4 \times 4}$. Clearly $\displaystyle \sigma_R \otimes \mathbb{I}_B \geq \rho'_{RB}$, so that we obtain:
\begin{equation}
\Hmin(B|R) \geq - \log \Tr[\sigma_R] =  - \log(1+\gamma) \, .
\end{equation}
Combining the results from the primal and dual programmes allows us to conclude that ${\Hmin(B|R) =  -\log (1+\gamma)}$ for all $\gamma \in [0,1]$.

The min-entropy after the measurement is related to the guessing probability as $\Hmin(X|R)= -\log \pguess(X|R)$ and so it is:
\begin{equation}
\Hmin(X|R) = - \log \left(\frac{1}{2} \left(\frac{\sqrt{2+2\gamma^2}}{2} + 1\right)\right) = 1 - \log\left(\frac{\sqrt{2+2\gamma^2}}{2} + 1\right) \, .
\end{equation}
Hence:
\begin{equation}
\Hmin(X|R) - \Hmin(B|R) = 1 - \log\left(\frac{\sqrt{2+2 \gamma^2} + 2}{2(1+\gamma)}\right) \, .
\end{equation}
We then see that $\Hmin(X|R) - \Hmin(B|R)$ monotonically increases with $\gamma \in [0,1]$ until it reaches the value of one for $\gamma = 1$. Hence the inequality \eqref{eq:Hmininequality} is tight for $\gamma = 1$ which corresponds to the greatest possible increase of the conditional min-entropy during the measurement performed on a qubit (see Fig.~\ref{fig:Hmin}).

We also compute the min-entropy $\Hmin(P|R)$ to get some insight into the correlations between basis register $R$ and its purification $P$ as a function of $\gamma$. For that purpose, let us redefine the way we label the states of registers $R$ and $P$ with respect to the labelling and notation used in \crefrange{eq:subsystems}{eq:ngammameaning}. Specifically, let $\ket{\alpha},\ket{\beta}$ be now the two states of the entire register $P$ (joint states of all the environmental subsystems $E_i$ that are in $P$) corresponding to the states $\ket{0},\ket{1}$ of the register $R$ respectively. The real parameter $\gamma \in [0,1]$, that quantifies the amount of information that $P$ holds about $R$, satisfies now:
\begin{equation}
\braket{\alpha|\beta} = \gamma \, ,
\label{eq:alphabetagamma}
\end{equation}
so that the joint state of registers $R$ and $P$ can be written as:
\begin{equation}
\ket{\xi(\gamma)}_{RP} = \frac{1}{\sqrt{2}}(\ket{0}_R\ket{\alpha}_P + \ket{1}_R\ket{\beta}_P) \, .
\label{eq:jointinitialRP}
\end{equation}
Note that the state $\ket{\xi(\gamma)}_{RP}$ defined in Eq.~\eqref{eq:jointinitialRP} is pure. Then $\Hmin(P|R) = -\log(\Tr[\sqrt{\rho_R}])^2 = -\log(\Tr[\sqrt{\rho_P}])^2$. Note that $\Tr[\sqrt{\rho_R}] = \Tr[\sqrt{\rho_P}]$ is the sum of the Schmidt coefficients of the state $\ket{\xi(\gamma)}_{RP}$. The eigenvalues of $\rho_R(\gamma)$ defined in Eq.~\eqref{eq:rhoR} (with real and positive $\gamma$) are $\lambda_1 = \frac{1+\gamma}{2}$ and $\lambda_2 = \frac{1-\gamma}{2}$. Hence:
\begin{equation}
\Hmin(P|R) = -\log\left(\sqrt{ \frac{1+\gamma}{2}} + \sqrt{\frac{1-\gamma}{2}}\right)^2 = -\log(1+ \sqrt{1-\gamma^2}) \, .
\end{equation}

Similarly we calculate $\Hmin(P|R)$ after the conditional Fourier transform in Fig.~\ref{fig:circuit} has been applied, to quantify the effect of this operation on the correlations between $R$ and $P$. Firstly we need to calculate $\rho_{RP}$ at time $t_2$. That is, again following the circuit in Fig.~\ref{fig:circuit} but now including the purification $P$, the initial state at time $t_1$ is $\ket{\Phi(\gamma, d, \ket{\phi})}_{RPB} = \ket{\xi(\gamma)}_{RP}\otimes \ket{\phi}_B$. Then the state at time $t_2$ is $\ket{\Phi'(\gamma, d, \ket{\phi})}_{RPB} = U\ket{\Phi(\gamma, d, \ket{\phi})}_{RPB}$, where $U$ is given by:
\begin{equation}
U = \dya{0}_R \otimes \mathbb{I}_P \otimes \mathbb{I}_B + \dya{1}_R \otimes \mathbb{I}_P \otimes F_B \, .
\end{equation}
Hence:
\begin{equation}
\ket{\Phi'(\gamma, d, \ket{\phi})}_{RPB} = \frac{1}{\sqrt{2}}(\ket{0}_R \ket{\alpha}_P \ket{\phi}_B + \ket{1}_R \ket{\beta}_P F_B \ket{\phi}_B) \, ,
\end{equation}
We can now trace out $B$.
\begin{equation}
\begin{aligned}
\rho'_{RP}(\gamma, d, \ket{\phi}) &= \frac{1}{2}\left(\dya{0}_R \otimes \dya{\alpha}_P + \bra{\phi}F^\dag\ket{\phi} \dyad{0}{1}_R \otimes \dyad{\alpha}{\beta}_P \right. \\
																	&+ \left. \bra{\phi}F\ket{\phi} \dyad{1}{0}_R \otimes \dyad{\beta}{\alpha}_P + \dya{1}_R \otimes \dya{\beta}_P\right) \, .
\end{aligned}
\end{equation}
Now let us consider the two-dimensional game again with $\ket{\phi}_B$ being one of the two states that achieve $\pguessmax(\gamma, d=2)$  for all $\gamma \in [0,1]$ (these are the states $\ket{\phi} = \ket{\phi_{10}}$ and $\ket{\phi} = \ket{\phi_{01}}$). Then $\bra{\phi}F\ket{\phi}$ = 0, so the state on $R$ and $P$ at $t_2$ is:
\begin{equation}
\rho_{RP}(\gamma, d=2, \ket{\phi}) = \frac{1}{2}\left(\dya{0}_R \otimes \dya{\alpha}_P + \dya{1}_R \otimes \dya{\beta}_P\right) \, .
\end{equation}
To calculate $\Hmin(P|R)$ we again use the formulation of min-entropy in terms of the semi-definite programmes~\cite{Konig}. For the dual programme in Eq.~\eqref{eq:dual}, note that $\rho_{RP}$ has eigenvalues $\{\frac{1}{2}, \frac{1}{2}, 0, 0\}$. Hence $\displaystyle \sigma_R = \frac{\mathbb{I}_R}{2}$ clearly satisfies the constraints, as then $\displaystyle \sigma_R \otimes \mathbb{I}_P = \frac{\mathbb{I}_{4 \times 4}}{2}$ and so $\sigma_R \geq 0$ and $ \sigma_R \otimes \mathbb{I}_P \geq \rho_{RP}$. The corresponding solution is $\Hmin(P|R) \geq 0$. Similarly, in Eq.~\eqref{eq:qcorr}, let us consider $\mathcal E$ to be a quantum  channel acting on $R$ with Krauss operators $\{M_i\}$, where $M_0 = \dyad{\alpha}{0}$ and $M_0 = \dyad{\beta}{1}$.
Then:
\begin{equation}
\rho'_{RP} = (\mathcal E \otimes \: \id_P)(\rho_{RP}) = \frac{1}{2}\left(\dya{\alpha}_R \otimes \dya{\alpha}_P + \dya{\beta}_R \otimes \dya{\beta}_P\right) \, .
\end{equation}
Since $\braket{\alpha|\beta} = \gamma$, we have $\braket{\alpha^\bot|\beta} = e^{i\phi} \sqrt{1 - \gamma^2}$ for some phase $\phi$, where $\braket{\alpha|\alpha^\bot} = 0$. Now, let $\ket{\Psi}_{RP}$ be a maximally entangled state of the form $\ket{\Psi}_{RP} = \frac{1}{\sqrt{2}}(\ket{\alpha}_R\ket{\alpha}_P + e^{2i\phi}\ket{\alpha^\bot}_R\ket{\alpha^\bot}_P)$. Therefore:
\begin{equation}
\begin{aligned}
q_{\text{corr}}(P|R)	&= 2 F\left(\rho'_{RP}, \dya{\Psi}_{RP}\right)^2 \\
				&= \frac{1}{2}\left(\bra{\alpha}_R\bra{\alpha}_P + e^{-2i\phi}\bra{\alpha^\bot}_R\bra{\alpha^\bot}_P\right) \left(\dya{\alpha}_R \otimes \dya{\alpha}_P \right. \\
				&\left. + \dya{\beta}_R \otimes \dya{\beta}_P\right) \left(\ket{\alpha}_R\ket{\alpha}_P + e^{2i\phi}\ket{\alpha^\bot}_R\ket{\alpha^\bot}_P\right) \\
				&= \frac{1}{2}\left(1+ \abs{\braket{\alpha|\beta}}^4 + \abs{\braket{\alpha^\bot|\beta}}^4 + e^{2i\phi}(\braket{\alpha|\beta})^2 (\braket{\beta|\alpha^\bot})^2 + e^{-2i\phi}(\braket{\beta|\alpha})^2 (\braket{\alpha^\bot|\beta})^2 \right)\\
				&= \frac{1}{2}\left(1 + \gamma^4 + \left(1-\gamma^2\right)^2 + 2 \gamma^2  \left(1-\gamma^2\right)\right)\\
				&= 1 \, .
\end{aligned}
\end{equation}
Hence the corresponding solution is $\Hmin(P|R) \leq 0$. Therefore combining the results from the primal and dual programmes we conclude that $\Hmin(P|R) = 0$ for all $\gamma \in [0,1]$.
\end{document}